\newtheorem{theorem}{Theorem}
\newtheorem{proposition}[theorem]{Proposition}
\newtheorem{lemma}[theorem]{Lemma}
\newtheorem{corollary}[theorem]{Corollary}
\newtheorem*{theorem*}{Theorem}
\newtheorem*{conjecture*}{Conjecture}
\newtheorem{definition}[theorem]{Definition}
\theoremstyle{remark}
\newtheorem{remark}[theorem]{Remark}
\newtheorem{example}[theorem]{Example}
\newcommand{\GL}{{\mathrm{GL} }}
\newcommand{\w}{{\mathrm{w_H}}}
\newcommand{\Aut}{{\mathrm{Aut}}}
\newcommand{\SL}{{\mathrm{SL}}}
\newcommand{\stab}{{\mathrm{Stab}}}
\newcommand{\Aff}{\mathbb{A}}
\newcommand{\PP}{\mathbb{P}}
\newcommand{\Z}{\mathbb{Z}}
\newcommand{\Adelta}{\mathbb{A}^{\delta}\left({\mathbb F}_q\right)}
\newcommand{\F}{\mathbb{F}_q}
\newcommand{\Fq}{\mathbb{F}_q}
\newcommand{\Fqm}{\mathbb{F}_q^m}
\newcommand{\C}{C^\mathbb{A}(\ell,m)}
\newcommand{\Ev}{\mathrm{Ev}}
\newcommand{\rank}{\mathrm{rank}}
\newcommand{\supp}{\mathrm{supp}}
\begin{document}

\title{Affine Grassmann Codes}

\author{Peter Beelen}
\address{Department of Mathematics, Technical University of Denmark, \newline \indent
DK 2800, Lyngby, Denmark.}
\email{p.beelen@mat.dtu.dk}

\author{Sudhir R. Ghorpade}
\address{Department of Mathematics, 
Indian Institute of Technology Bombay,\newline \indent
Powai, Mumbai 400076, India.}
\email{srg@math.iitb.ac.in}

\author{Tom H{\o}holdt}
\address{Department of Mathematics, Technical University of Denmark, \newline \indent
DK 2800, Lyngby, Denmark.}
\email{T.Hoeholdt@mat.dtu.dk}

\date{October 8, 2009; Revised: June 10, 2010. To appear in the \emph{IEEE Transactions of Information Theory}, Vol. 56, No. 7 (July 2010).} 

\begin{abstract}
We consider a new class of linear codes, called affine Grassmann codes. These can be viewed as a variant of generalized Reed-Muller codes and are closely related to Grassmann codes. We determine the length, dimension, and the minimum distance of any affine Grassmann code. Moreover, we show that 
affine Grassmann codes have a large automorphism group and determine the number of minimum weight codewords.
\end{abstract}
\maketitle

\section{Introduction}

Reed-Muller codes are among the most widely studied classes of linear error correcting codes. 
Numerous generalizations and variants of Reed-Muller codes have also been considered in the literature. (See, for example, \cite{DGM}, \cite[Ch. 13--15]{MW}, \cite[Ch. 1, \S 13; Ch. 11, \S 3.4.1; Ch. 16, \S 3; Ch. 17, \S 4]{handbook} and the relevant references therein). In this paper we introduce a class of linear codes that appears to be 
a genuinely distinct variant of Reed-Muller codes. As explained in Section \ref{sec:grassmann}, this new class of codes is intimately related to the so-called Grassmann codes, which have been of much current interest (see, for example, \cite{GL, GPP, HJR2, N} and the relevant references therein), 
and with this in view we call these the \emph{affine Grassmann codes}. Roughly speaking, affine Grassmann codes are obtained by evaluating 
linear polynomials in the minors of a generic 
$\ell \times \ell'$ matrix at all points of the corresponding affine space over a finite field. Evidently, when $\ell =1$, this gives the first 
order generalized Reed-Muller code ${\rm RM}(1,\ell')$. However, in general, the resulting code is distinct from higher order generalized Reed-Muller codes and determination of several of its properties appears to be rather nontrivial.  Our main results include the determination of the minimal distance (Theorem \ref{thh:mindist}) and a characterization as well as an explicit enumeration of the minimum weight codewords (Theorems \ref{thm:char} and \ref{thm:enum}). Further, we show that affine Grassmann codes have a large automorphism group (Theorem \ref{thm:autom}); this result could be viewed as an  extension of the work of Delsarte, Goethals and MacWilliams \cite[Thm. 2.3.1]{DGM}, Kn{\"o}rr and Willems \cite{KW} as well as Berger and Charpin \cite{BC} on the automorphisms of Reed-Muller codes. In geometric terms, some of our results could be viewed as a generalization of elementary facts about hyperplanes over finite fields to ``determinantal hyperplanes''. (See Remark \ref{rem:dethyper} for greater details.) The auxiliary results obtained in the course of proving the main theorems and the techniques employed may also be of some independent interest.   

\section{Preliminaries}
\label{sec:prelim}

Denote, as usual, by $\F$ the finite field with $q$ elements. Fix positive integers $\ell$ and $\ell'$ and a $\ell \times \ell'$ matrix $X= \left(X_{ij}\right)$ whose entries are algebraically independent indeterminates over $\F$. By $\F[X]$ we denote the polynomial ring in the  $\ell\ell'$ indeterminates $X_{ij}$ ($1\le i\le \ell, \; 1\le j \le {\ell}'$) with coefficients in $\F$.   For convenience, we introduce the following notation for the rows and columns of the matrix $X$: 
$$
{\mathbf{X}}_i=(X_{i1} \cdots X_{i\ell'}) \mbox{ for $1\le i\le \ell$} \quad \makebox{ and } \quad 
{\mathbf{X}}^j=\left(\begin{array}{c}X_{1j}\\ \vdots \\ X_{\ell j}\\\end{array}\right) \mbox{for $1\le j\le \ell'$}.
$$
Recall that by a \emph{minor} of $X$ of order $i$ we mean the determinant of an $i\times i$ submatrix of $X$. A minor of $X$ of order $i$ is 
sometimes referred to as an $i \times i$ minor of $X$. But in any case, it should be remembered that the minors of $X$ are not matrices, but are 
elements of the polynomial ring $\F[X]$. 

We are primarily interested in the linear space generated by all the minors of $X$. This is unchanged if we replace $X$ by its transpose. 
With this in view, we shall always assume that $\ell \le \ell'$. Further, we set
$$
m = \ell + \ell' \quad \mbox{and} \quad \delta = \ell \ell'.
$$
For $0\le i\le \ell$, we let $\Delta_i(\ell,m)$ be the set of all $i \times i$ minors of $X$, where, as per standard conventions, the only $0\times 0$ minor of $X$ is $1$.  We define
$$
\Delta(\ell,m) = \bigcup_{i=0}^{\ell} \Delta_i(\ell,m) .
$$

\begin{definition} 
\label{def:Flm}
The linear space ${\mathcal F}(\ell,m)$ over $\F$ is the  subspace of $\F[X]$ generated by $\Delta(\ell,m)$.
\end{definition}
%

For example, if $\ell=\ell'=2$, then $m=4$, $\delta=4$, and 
$\Delta_0(2,4)=\{1\}$, while 
$$
\Delta_1(2,4)=\{X_{11},X_{12},X_{21},X_{22}\} \text{ and } \Delta_2(2,4)=\{X_{11}X_{22}-X_{12}X_{21}\}.
$$
Thus a typical element of ${\mathcal F}(2,4)$ looks like
\begin{equation}
\label{TypicalF24}
a + b_1 X_{11} + b_2 X_{12} + b_3 X_{21} + b_4 X_{22} + c\left(X_{11}X_{22}-X_{12}X_{21}\right)
\end{equation}
where $a,b_1,b_2,b_3,b_4,c\in \Fq$. Observe that $\# \Delta(2,4) = 6$, where
for a finite set $D$, we 
denote by $\# D$ the cardinality of $D$. In general, we have the following. 

\begin{lemma}
\label{lem:CardDeltaml}
The cardinality of $\Delta(\ell,m)$ is ${\binom{m}{\ell}}$.
\end{lemma}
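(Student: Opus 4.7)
The plan is to reduce the problem to a standard binomial identity by first counting the minors of each fixed order. First I would observe that an $i \times i$ minor of $X$ is determined by a choice of $i$ rows and $i$ columns out of $\ell$ and $\ell'$ respectively, so there are at most $\binom{\ell}{i}\binom{\ell'}{i}$ such minors. To see that distinct choices give distinct polynomials in $\F[X]$, I would note that the indeterminates occurring in the minor indexed by rows $R$ and columns $C$ are precisely the $X_{ij}$ with $i \in R$ and $j \in C$; the sets $R$ and $C$ can then be recovered from this variable set as the collections of first and second indices respectively, so the correspondence from $(R,C)$ to the associated minor is injective. Minors of different orders cannot coincide either, since they are homogeneous of different degrees. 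Hence
$$\#\Delta(\ell,m) \;=\; \sum_{i=0}^{\ell}\#\Delta_i(\ell,m) \;=\; \sum_{i=0}^{\ell}\binom{\ell}{i}\binom{\ell'}{i}.$$

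To finish, I would invoke the Chu--Vandermonde identity
$$\binom{\ell+\ell'}{\ell} \;=\; \sum_{i=0}^{\ell}\binom{\ell}{i}\binom{\ell'}{\ell-i},$$
and apply the symmetry $\binom{\ell}{\ell - i} = \binom{\ell}{i}$ (equivalently, re-index by $j = \ell - i$) to rewrite the right-hand side as $\sum_{i=0}^{\ell}\binom{\ell}{i}\binom{\ell'}{i}$. Since $m = \ell + \ell'$, this gives $\#\Delta(\ell,m) = \binom{m}{\ell}$, as required.

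The only mildly delicate point is the injectivity argument of the first paragraph, and even that is essentially immediate from the explicit form of the minor expansion; the binomial manipulation is routine, so I do not anticipate any substantive obstacle.
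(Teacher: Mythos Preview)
Your proof is correct and follows essentially the same approach as the paper: count $\#\Delta_i(\ell,m)=\binom{\ell}{i}\binom{\ell'}{i}$ and then apply the Chu--Vandermonde identity. The paper is slightly terser about why distinct row/column choices yield distinct minors (it simply notes that the entries of $X$ are indeterminates), but your more explicit injectivity argument is a harmless elaboration of the same point.
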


\begin{proof}
Since the entries of $X$ are indeterminates, the number of minors of $X$ of order $i$ is the number of $i\times i$ submatrices of $X$. 
An  $i\times i$ submatrix of $X$ is obtained by choosing 
$i$ rows among the $\ell$ rows and $i$ columns among the $\ell'$ columns. Thus 
$$
\# \Delta_i(\ell,m) = {\binom{\ell}{i}}{\binom{\ell'}{i}}  \quad \mbox{ for $0\le i\le \ell$}.
$$
Consequently,  
$$
\# \Delta(\ell,m) = \sum_{i\ge 0} {\binom{\ell}{\ell-i}}{\binom{\ell'}{i}} 
={\binom{m}{\ell}},
$$ 
where the last equality 
follows from the so-called Chu--Vandermonde summation 
(see e.g\@. \cite[Sec.~5.1, (5.27)]{GrKPAA}). 
\end{proof}

We remark that an alternative proof of the above lemma 
can be obtained by observing that the minors of $X$ (of arbitrary orders) are in a natural one-to-one correspondence with the $\ell \times \ell$ minors of the $\ell\times m$ matrix $(X|I)$ obtained by adjoining to $X$ a $\ell \times \ell$ identity matrix.

%

The following basic result can be viewed as a very special case of the standard basis theorem or the straightening law 
of Doubilet, Rota and Stein (cf. \cite{DRK1974}, \cite[Thm. 4.2]{ghorpade1994}). In the case we are interested in, a much 
simpler proof can be given and this is included below. 

\begin{lemma}
\label{lem:dimFml}
The elements of $\Delta(\ell,m)$ are linearly independent. In particular,  
$$
\dim_{\F}{\mathcal F}(\ell,m)=\binom{m}{\ell}.
$$
\end{lemma}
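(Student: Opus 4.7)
The plan is to prove something much stronger than mere linear independence: the distinct minors in $\Delta(\ell,m)$ have pairwise disjoint supports as polynomials in $\F[X]$. Once this is established, linear independence is immediate, and the dimension statement then follows directly from Lemma~\ref{lem:CardDeltaml}.

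First I would recall the cofactor/Leibniz expansion of a minor. If $R=\{r_1<\cdots<r_i\}\subseteq\{1,\dots,\ell\}$ and $C=\{c_1<\cdots<c_i\}\subseteq\{1,\dots,\ell'\}$, then the corresponding $i\times i$ minor, call it $[R|C]$, equals
$$
[R|C]\;=\;\sum_{\sigma\in S_i}\mathrm{sgn}(\sigma)\,X_{r_1 c_{\sigma(1)}}X_{r_2 c_{\sigma(2)}}\cdots X_{r_i c_{\sigma(i)}}.
$$
Every monomial occurring in this sum has a total degree equal to $i$, uses exactly $i$ distinct indeterminates $X_{ab}$, and the multiset of row-subscripts appearing in it is exactly $R$ while the multiset of column-subscripts is exactly $C$. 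Hence from any single monomial in the expansion one recovers both $R$ and $C$.

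Next I would compare two distinct minors. If they have different orders, their monomials have different total degrees and the supports are obviously disjoint. If they are both of order $i$, say $[R|C]$ and $[R'|C']$ with $(R,C)\neq(R',C')$, any common monomial would, by the observation above, force $R=R'$ (the set of its row subscripts) and $C=C'$ (the set of its column subscripts), a contradiction. Finally each minor is itself a nonzero polynomial, since (for example) the identity-permutation term $X_{r_1 c_1}\cdots X_{r_i c_i}$ appears with coefficient $+1$ and, as just noted, no other permutation can contribute the same monomial.

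With these observations in hand, suppose $\sum_{M\in\Delta(\ell,m)} a_M\,M = 0$ in $\F[X]$ with $a_M\in\F$. Fix any $M_0\in\Delta(\ell,m)$ and pick any monomial $\mu$ in the support of $M_0$; by the disjoint-support property, $\mu$ appears in no other $M\in\Delta(\ell,m)$, so the coefficient of $\mu$ on the left side is $a_{M_0}$ times its (nonzero) coefficient in $M_0$. Therefore $a_{M_0}=0$, proving linear independence. The dimension formula $\dim_\F{\mathcal F}(\ell,m)=\binom{m}{\ell}$ then follows from Lemma~\ref{lem:CardDeltaml}. The only conceivable obstacle is bookkeeping with signs if one tried to argue via a term order, but the disjoint-support argument sidesteps this entirely, so the proof is essentially combinatorial and short.
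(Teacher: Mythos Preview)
Your proof is correct, and it is a genuinely different argument from the one the paper gives. The paper proceeds by substitution: assuming a dependence relation $\sum a_{\mathcal M}\mathcal M=0$, it first sets all $X_{ij}=0$ to kill everything but the $0\times 0$ minor, and then inductively, for each $\mathcal M\in\Delta_i(\ell,m)$, sets to zero every variable not appearing in $\mathcal M$; this annihilates all higher-order minors (a row outside $R$ becomes zero) and all other order-$i$ minors (same reason), leaving $a_{\mathcal M}\mathcal M=0$. Your route, by contrast, works entirely at the level of monomials via the Leibniz expansion and establishes the stronger fact that distinct minors have disjoint monomial supports. Your argument is arguably cleaner here and yields extra information (disjoint supports) for free; the paper's evaluation-based approach has the virtue that it does not rely on the fine structure of the Leibniz expansion and is closer in spirit to the straightening-law viewpoint the authors allude to, where products of minors need not have disjoint supports and one really does argue by specialization or by a filtration. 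Either way, the dimension formula follows from Lemma~\ref{lem:CardDeltaml} exactly as you say.
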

\begin{proof}
Suppose there is a linear dependence relation $\sum_{{\mathcal M} \in \Delta(\ell,m)}a_{\mathcal M}{\mathcal M}=0$, 
where  $a_{\mathcal M}\in \F$ for ${\mathcal M}\in \Delta(\ell,m)$. 
We will show by finite induction on $i$ ($0\le i \le \ell$) that $a_{\mathcal M} =0$ for all ${\mathcal M}\in \Delta_i(\ell,m)$. First, by specializing all the variables to zero (i.e., by substituting $X_{rs}=0$ for all $ r \in\{1,\dots , \ell\} $ and $s \in\{1. \dots , \ell'\}$ 
in the linear dependence relation), we see that the desired assertion holds when $i=0$. Next, suppose $i>0$ and $a_{\mathcal M} =0$ for all ${\mathcal M}\in \Delta_j(\ell,m)$ and all $j<i$.  
Pick a minor ${\mathcal M} \in \Delta_i(\ell,m)$. By specializing all the variables except the ones occurring in ${\mathcal M}$ to zero, we obtain 
$a_{\mathcal M} =0$. 
Repeating this procedure for each $i \times i$ minor, we obtain the induction step. This proves that the elements of $\Delta(\ell,m)$ are linearly independent. Consequently,  $\dim_{\F}{\mathcal F}(\ell,m) = \# \Delta(\ell,m) = {\binom{m}{\ell}}$. 
\end{proof}

\medskip

Thanks to Lemma \ref{lem:dimFml}, every element of ${\mathcal F}(\ell,m)$ is a unique $\F$-linear combination of the elements of $\Delta(\ell,m)$. 
With this in view, we make the following definition.  

\begin{definition}\label{def:supp}
Given $f=\sum_{{\mathcal M} \in \Delta(\ell,m)}a_{\mathcal M}{\mathcal M}\in{\mathcal F}(\ell,m)$, where $a_{\mathcal M}\in \F$ for every ${\mathcal M}\in{\Delta}(\ell,m)$, 
the support of $f$ is the set 
$$\supp(f):=\{{\mathcal M}\in \Delta(\ell,m) \, : \, a_{{\mathcal M}} \neq 0\}.$$
Note that the the support of $f$ is the empty set if and only if $f$ is the zero polynomial. 
\end{definition}
 

We shall denote the space of all $\ell\times \ell'$ matrices with entries in $\F$ by $\Adelta$, or simply by ${\mathbb A}^{\delta}$. Indeed,
for fixed positive integers $\ell$ and $\ell'$, this space can be readily identified with the $\delta$-dimensional affine space over $\Fq$,
where $\delta = \ell \ell'$, as before. 
It is clear that for any $f\in \F[X]$ (and in particular, any $f\in {\mathcal F}(\ell,m)$) 
and $P\in \Aff^{\delta}$, the element $f(P)$ of $\F$ is well-defined. Now let us fix an enumeration $P_1,P_2,\dots,P_{q^\delta}$ of $\Aff^{\delta}$. 

\begin{definition} The evaluation map of $\F[X]$ is the map 
$$
\Ev: \F[X]\to  \F^{q^{\delta}} \quad \mbox{ defined by } \quad 
\Ev(f):= \left(f(P_1),\dots,f(P_{q^\delta}) \right).
$$ 
\end{definition}

It is clear that the evaluation map $\Ev$ defined above is a surjective linear map. Also, it is well-known that the kernel 
of $\Ev$ is 
the ideal of $\F[X]$ generated by $\left\{ X_{ij}^q-X_{ij} \, : \, 1\le i \le \ell,\; 1\le j \le \ell' \right\}$, 
and that this kernel contains no
nonzero polynomial having degree $< q$ in each of the variables.  (See, for example, \cite[p. 11]{Joly}.) 
In particular, if $0\ne f \in {\mathcal F}(\ell,m)$, then $f$ cannot be in the kernel of $\Ev$ because $\deg_{X_{ij}}(f)\le 1$ for each variable $X_{ij}$. Thus the restriction of the evaluation map $\Ev$ to ${\mathcal F}(\ell,m)$ is injective. We are now ready to define the codes that are  studied in the remainder of this paper.

\begin{definition}
The affine Grassmann code $\C$ is 
the image of ${\mathcal F}(\ell,m)$ under the evaluation map $\Ev$.  
The minimum distance of $\C :=\Ev\left({\mathcal F}(\ell,m)\right)$ will be denoted by $d(\ell,m)$.
\end{definition}

Recall that a code $C$ is said to be \emph{degenerate} if there exists a coordinate position $i$ such that $c_i=0$ for all $c\in C$. 
It turns out that affine Grassmann codes are nondegenerate and their length and dimension are easily determined.

\begin{lemma}
The affine Grassmann code $\C$ is a nondegenerate linear code of length $q^{\delta}$ and dimension $\binom{m}{\ell}$.
\end{lemma}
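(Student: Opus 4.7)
The plan is to assemble the three conclusions — linearity with length $q^\delta$, dimension $\binom{m}{\ell}$, and nondegeneracy — directly from the machinery already developed in Section~\ref{sec:prelim}, so essentially no new work is needed.

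First I would observe that the length and linearity are immediate from the definition: $\C$ is by construction the image of the $\F$-vector space ${\mathcal F}(\ell,m)$ under the linear map $\Ev$, and the codomain of $\Ev$ is $\F^{q^{\delta}}$. Hence $\C$ is an $\F$-linear subspace of $\F^{q^{\delta}}$, giving both linearity and length $q^{\delta}$.

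Next I would establish the dimension by invoking the injectivity of $\Ev$ restricted to ${\mathcal F}(\ell,m)$. This injectivity was noted in the paragraph immediately preceding the definition of $\C$: the kernel of $\Ev$ on $\F[X]$ contains no nonzero polynomial of degree $<q$ in each variable, and every element of ${\mathcal F}(\ell,m)$ has $\deg_{X_{ij}}\le 1$ for each $(i,j)$. Consequently
$$
\dim_{\F}\C = \dim_{\F}{\mathcal F}(\ell,m) = \binom{m}{\ell},
$$
where the last equality is Lemma~\ref{lem:dimFml}.

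For nondegeneracy, my approach is simply to exhibit a codeword all of whose coordinates are nonzero. The constant polynomial $1$, being the unique element of $\Delta_0(\ell,m)$, lies in ${\mathcal F}(\ell,m)$, and $\Ev(1) = (1,1,\dots,1) \in \F^{q^{\delta}}$. Since the $i$-th coordinate of this codeword is $1 \neq 0$ for every $i$, no coordinate position can vanish identically on $\C$. There is no real obstacle in this proof; every ingredient has been set up in advance, and the main thing is to cite each one precisely.
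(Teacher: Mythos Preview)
Your proof is correct and matches the paper's own argument essentially point for point: linearity and length are immediate, nondegeneracy comes from $\Ev(1)=(1,\dots,1)$, and the dimension follows from the injectivity of $\Ev$ on ${\mathcal F}(\ell,m)$ together with Lemma~\ref{lem:dimFml}. The only difference is cosmetic ordering (the paper handles nondegeneracy before dimension).
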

\begin{proof}
It is obvious that $\C$ is a linear code of length $q^\delta = \# {\mathbb A}^{\delta}(\F)$. Moreover, since the constant polynomial $1$, being the only element of $\Delta_0(\ell,m)$, is in ${\mathcal F}(\ell,m)$,  and since  
$\Ev(1)=(1,\dots, 1)$, it follows that $\C$ is nondegenerate. 
Finally, since the evaluation map is injective on ${\mathcal F}(\ell,m)$, it follows from Lemma \ref{lem:dimFml} that the dimension of $\C$ is  $\binom{m}{\ell}$.
\end{proof}

\begin{example}
\label{exa:WhatTheRefereeWanted}
Suppose $\ell = \ell'=2$. Then $m=\delta =4$ and the elements of ${\mathcal F}(2,4)$ are of the form \eqref{TypicalF24}. The affine space ${\mathbb A}^{\delta}$ consists of the $2\times 2$ matrices with entries in ${\mathbb F}_2=\{0,1\}$. There are $16$ such matrices and, upon letting
 $e_{ij}$ denote the $2\times 2$ matrix with $1$ in $(i,j)^{\rm th}$ position and $0$ elsewhere, ${\mathbb A}^{\delta}({\mathbb F}_2)$ may be enumerated as 
$
\mathbf{0}, \; e_{11}, \; e_{12}, \; e_{21}, \; e_{22}, \; e_{11} + e_{12}, \; e_{11} + e_{21}, \; e_{11} + e_{22}, \; e_{12} + e_{21}, \; e_{12} + e_{22}, \; e_{21} + e_{22}, \;  
e_{11} + e_{12} + e_{21}, \; e_{11} + e_{12} + e_{22}, \; e_{11} + e_{21} + e_{22}, \; e_{12} + e_{21} + e_{22}, \; e_{11} + e_{12} + e_{21} + e_{22},
$
where $\mathbf{0}$ denotes the $2\times 2$ zero matrix. Accordingly, the codewords of $C^{\mathbb{A}}(2,4)$ consist of the elements of ${\mathbb F}_2^{16}$ of the form $a\mathbf{1} + \mathbf{v}$, where $\mathbf{1}$ denotes the $16$-tuple all of whose coordinates are $1$, whereas  
$\mathbf{v}$ denotes the $16$-tuple given by $(0, \; b_1, \; b_2, \; b_3, \; b_4, \; b_1+b_2,  \; b_1+b_3,  \; b_1+b_4+c,  \; b_2+b_3+c,  \; b_2+b_4,  \; b_3+b_4,  \; b_1+b_2+b_3+c, \; b_1+b_2+b_4+c, 
\; b_1+b_3+b_4+c, \; b_2+b_3+b_4+c,    \; b_1+b_2+b_3+b_4+c)$.
Here $a,b_1,b_2,b_3,b_4, c$ vary over ${\mathbb F}_2$. As such, there are $2^6 = 64$ codewords, and it is clear that the code is nondegenerate and its dimension is $6$; indeed, a ${\mathbb F}_2$-basis of $C^{\mathbb{A}}(2,4)$ is obtained by setting exactly one of $a,b_1,b_2,b_3,b_4, c$ to be $1$ and the others to be $0$.  
Further, by listing the $64$ codewords, it is easily seen that every nonzero codeword is of (Hamming) weight $\ge 6$, and the codeword corresponding $a=b_1=b_2=b_3=b_4=0$ and $c=1$ is of weight $6$. Thus, at least in the binary case, $C^{\mathbb{A}}(2,4)$ is a $[16,6,6]$-code. 
\end{example}

We end this section by giving two lemmas on determinants that will be useful in the sequel. 

\begin{lemma}\label{lem:xplusa}
Let $Y= \left(Y_{ij}\right)$ be a $\ell \times \ell$ matrix whose entries are independent indeterminates over $\F$ 
and let $B=\left(b_{ij}\right)$ be a $\ell \times \ell$ matrix with entries 
in $\F$. Then there is $h\in {\mathcal F}(\ell,2\ell)$ such that 
$$
\det(Y+B)=\det(Y)+\sum_{1\le i,j\le \ell}(-1)^{i+j}b_{ij}\det(Y^{ij})+h \quad \mbox{with} \quad \supp(h) \subseteq  \bigcup_{i=0}^{\ell-2}\Delta_i(\ell,2\ell), 
$$ 
where $Y^{ij}$ denotes the $(\ell -1)\times (\ell -1)$ 
matrix obtained from $Y$ by deleting the $i$-th row and the $j$-th column.
\end{lemma}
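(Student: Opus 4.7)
The plan is to expand $\det(Y+B)$ via multilinearity of the determinant in its rows and then organize the resulting terms by how many rows come from $B$.

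First, I would write each row of $Y+B$ as $\mathbf{Y}_i+\mathbf{b}_i$, where $\mathbf{b}_i$ denotes the $i$-th row of $B$. By the multilinearity of the determinant with respect to rows, $\det(Y+B)$ expands as a sum indexed by subsets $S\subseteq\{1,\dots,\ell\}$, where in the $S$-th summand the rows indexed by $i\in S$ are $\mathbf{b}_i$ and the rows indexed by $i\notin S$ are $\mathbf{Y}_i$. The summand for $S=\emptyset$ is exactly $\det(Y)$, giving the first term. The summands with $|S|=1$ (say $S=\{i\}$) are determinants in which a single row is replaced by $\mathbf{b}_i$; applying cofactor expansion along that row produces $\sum_{j=1}^{\ell}(-1)^{i+j}b_{ij}\det(Y^{ij})$, and summing over $i$ yields precisely the middle term of the claimed identity.

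Next I would show that every contribution with $|S|\ge 2$ lies in the span of minors of $Y$ of order at most $\ell-2$, which is the essence of the support condition on $h$. For fixed $S$ with $|S|=k\ge 2$, the corresponding term is the determinant of a matrix whose rows indexed by $S$ are rows of $B$ (hence scalars) and whose rows indexed by $\{1,\dots,\ell\}\setminus S$ are rows of $Y$. Applying the Laplace expansion along the $k$ rows indexed by $S$ expresses this determinant as an $\mathbb{F}_q$-linear combination (with coefficients that are $k\times k$ minors of $B$ times signs) of the complementary $(\ell-k)\times(\ell-k)$ minors of $Y$. Since $k\ge 2$, each such complementary minor has order $\ell-k\le \ell-2$, so each lies in $\bigcup_{i=0}^{\ell-2}\Delta_i(\ell,2\ell)$.

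Defining $h$ as the sum of all contributions with $|S|\ge 2$ then completes the proof: $h\in\mathcal{F}(\ell,2\ell)$, and by the preceding paragraph $\supp(h)\subseteq\bigcup_{i=0}^{\ell-2}\Delta_i(\ell,2\ell)$, as desired. The only mildly subtle step is the bookkeeping for the Laplace expansion in the $|S|\ge 2$ case, but since we do not need to identify the coefficients of $h$ explicitly—only to control its support—this poses no real obstacle; no cancellation among minors of $Y$ can occur because, by Lemma~\ref{lem:dimFml}, the elements of $\Delta(\ell,2\ell)$ are $\mathbb{F}_q$-linearly independent, so the support statement follows at the level of coefficients.
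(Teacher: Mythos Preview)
Your proof is correct and follows essentially the same approach as the paper's own proof: expand $\det(Y+B)$ by multilinearity, identify the contributions from $|S|=0$ and $|S|=1$, and use Laplace expansion to see that each term with $|S|\ge 2$ is an $\F$-linear combination of minors of order $\le \ell-2$. The only cosmetic difference is that the paper expands by columns rather than rows.
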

\begin{proof}
For a subset $S$ of $\{1,\dots,\ell\}$, denote by $(Y,B)_S$ the matrix obtained from $Y$ by replacing for all $j \in S$, the $j$-th column of $Y$ by the $j$-th column of $B$. By the multilinearity of the determinant, we readily see that 
$$
\det(Y+B)=\sum_{S}\det((Y,B)_S),
$$
where the sum is over all subsets $S$ of $\{1,\dots,\ell\}$. 
Observe that if $S$ is the empty set, then $\det((Y,B)_S)=\det(Y)$. Moreover, if $S$ is singleton, say $S=\{j\}$, where $1\le j \le \ell$, then 
by developing the determinant along the $j$-th column we find that 
$$
\det((Y,B)_S)=\sum_{i=1}^{\ell}(-1)^{i+j}b_{ij}\det(Y^{ij}).
$$
Finally, if $S\subseteq \{1,\dots,\ell\}$ with $\# S = s \ge 2$, then using Laplace expansion along the columns indexed by the elements of $S$, we 
see that $\det((Y,B)_S)$ is a $\F$-linear combination of minors in $\Delta_{\ell-s}(\ell,2\ell)$. This yields the desired result. 
\end{proof}

\medskip

We will also need the following well-known result whose proof can be found, for example, in \cite[Ch. I, \S 2]{Gant}. 

\begin{lemma}[Cauchy-Binet]\label{lem:CB}
Let $r$ and $s$ be positive integers such that $r\le s$, and let $A$ be a $r \times s$ matrix and $B$ a $s \times r$ matrix with entries in a commutative ring. 
For a subset $I$ of $\{1,\dots,s\}$ with $\# I = r$,  denote by $A^I$ the $r\times r$ submatrix of $A$ formed by 
the $j$-th columns of $A$ for $j \in I$, and denote by $B_I$ the $r\times r$ submatrix of $B$ formed by 
the $i$-th rows of $B$ for $i \in I$. Then 
$$
\det(AB)=\sum_{I}\det(A^I)\det(B_I),
$$ 
where the sum is over all subsets $I$ of $\{1,\dots, s\}$ of cardinality $r$.
\end{lemma}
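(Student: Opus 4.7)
The plan is to prove Cauchy--Binet by expanding $\det(AB)$ using the multilinearity and alternating property of the determinant in its columns. Let $A^{j}$ denote the $j$-th column of $A$, so that the $k$-th column of $AB$ is $\sum_{j=1}^{s} b_{jk} A^{j}$. Substituting this expression for each of the $r$ columns of $AB$ and expanding via multilinearity will produce a sum indexed by all functions $\phi : \{1,\dots,r\} \to \{1,\dots,s\}$.

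Next, I would discard the terms that vanish. Specifically, any $\phi$ that fails to be injective yields a determinant with a repeated column, which is zero by alternation. Hence the sum reduces to a sum over injective $\phi$. I would then group these injective $\phi$ by their image $I \subseteq \{1,\dots,s\}$, which is necessarily of cardinality $r$. For a fixed $I = \{i_1 < \cdots < i_r\}$, the injective maps with image $I$ are in bijection with $S_r$ via $\phi(k) = i_{\sigma(k)}$.

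With this parametrization, the contribution of a given $I$ becomes
$$
\sum_{\sigma \in S_r} \left(\prod_{k=1}^{r} b_{i_{\sigma(k)},\, k}\right) \det\bigl(A^{i_{\sigma(1)}}, \dots, A^{i_{\sigma(r)}}\bigr).
$$
Pulling out the sign incurred in reordering the columns, the second factor equals $\operatorname{sgn}(\sigma) \det(A^{I})$, while the sum over $\sigma$ of $\operatorname{sgn}(\sigma)\prod_k b_{i_{\sigma(k)},k}$ is by definition $\det(B_{I})$, since $B_{I}$ has $(j,k)$-entry $b_{i_j, k}$. Summing over $I$ yields the claimed identity.

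The argument is almost entirely bookkeeping, and the only delicate step is matching the sign $\operatorname{sgn}(\sigma)$ correctly when reordering $A^{i_{\sigma(1)}}, \dots, A^{i_{\sigma(r)}}$ into the natural order $A^{i_{1}}, \dots, A^{i_{r}}$; this is where a careless proof would go wrong, so I would make this step explicit. No use of the hypothesis $r \le s$ is needed beyond ensuring that $r$-element subsets of $\{1,\dots,s\}$ exist (if $r > s$ the sum is empty and $\det(AB) = 0$, consistent with the rank bound). Everything in the argument is valid over an arbitrary commutative ring, which is precisely the generality in which the lemma is stated.
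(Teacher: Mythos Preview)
Your argument is correct and is the standard multilinearity/alternation proof of Cauchy--Binet; all steps, including the sign bookkeeping and the remark on the case $r>s$, are sound over any commutative ring. Note that the paper does not give its own proof of this lemma but simply cites \cite[Ch.~I, \S 2]{Gant}, so there is no in-paper argument to compare against.
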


\begin{remark}\label{rem:dethyper}
As a warm-up for the results of the subsequent section, let us consider the 
case of $\ell=1$ even though it is rather trivial. 
Here ${\mathcal F}(1,m)$ corresponds to the space of linear polynomials in $\ell'$ variables of the 
form $h = a_0 + a_1X_{11} + \cdots + a_{\ell'}X_{1\ell'}$. 
For any such $h$, the Hamming weight of the corresponding codeword $\Ev(h)$ amounts to finding the the number of $\F$-rational points on a hyperplane in $\Aff^{\ell'}$. Indeed, assuming that $\Ev(h)$ is nonzero, or equivalently that not all $a_0, a_1, \dots , a_{\ell'}$ are zero, it is readily seen that
$$
\w\left(\Ev(h)\right) \; = \; \# \Aff^{\ell'}(\F)  - \# H  \; = \; \begin{cases} q^{\ell'} & \text{if } a_1= \dots = a_{\ell'}=0, \\ q^{\ell'} - q^{\ell'-1} & \text{otherwise,} \end{cases}
$$
where $H$ denotes the affine hyperplane $\{P\in \Aff^{\ell'} (\F) : h(P)=0\}$. 
It follows that the minimum distance of $C^{\Aff}(1, m)$ is $q^{\ell'-1} (q-1)$,
and also that the number of minimum weight codewords is $(q^{\ell'+1} - q)$. 
In a similar manner, the general case corresponds to finding the maximum number of points on a ``determinantal hyperplane'', i.e., the zero-set of an arbitrary nonzero element of ${\mathcal F}(\ell,m)$, and finding the minimum weight codewords corresponds to finding those determinantal hyperplanes where the maximum is attained. 
\end{remark}

\section{Minimum distance}
\label{sec:mindist}
In this section we will compute the minimum distance $d(\ell,m)$ of the affine Grassmann code $\C$. We start by determining the Hamming weight of a maximal minor, obtaining thereby an upper bound for $d(\ell,m)$. As usual we denote by $\w(c)$ the Hamming weight of a codeword $c$.

\begin{lemma}\label{lem:upper}
Let $\mathcal M \in \Delta_{\ell}(\ell,m)$. Then 
$$
\w(\Ev({\mathcal M}))=q^{\delta-\ell^2}\prod_{i=0}^{\ell-1}(q^\ell-q^i).
$$
In particular, 
$$
d(\ell,m)\le q^{\delta-\ell^2}\prod_{i=0}^{\ell-1}(q^\ell-q^i).
$$
\end{lemma}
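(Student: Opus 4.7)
The plan is to compute $\w(\Ev(\mathcal{M}))$ directly as a counting problem, and then obtain the inequality from the fact that $\mathcal{M}$ is, by Lemma \ref{lem:dimFml}, a nonzero element of $\mathcal{F}(\ell,m)$, so that $\Ev(\mathcal{M})$ is a nonzero codeword of $\C$. Thus $d(\ell,m)\le \w(\Ev(\mathcal{M}))$ once the weight is determined.

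To compute the weight, observe that an element $\mathcal{M}\in \Delta_\ell(\ell,m)$ is the determinant of the $\ell\times \ell$ submatrix of $X$ formed by selecting some $\ell$ of the $\ell'$ columns (all $\ell$ rows are used, since the minor is of order $\ell$). I would relabel columns so that, without loss of generality, $\mathcal{M} = \det(Y)$ where $Y$ is the submatrix consisting of the first $\ell$ columns of $X$; this is a pure bookkeeping step that reindexes $\mathbb{A}^\delta(\F)$ but does not change Hamming weights. The key observation is then that $\mathcal{M}$ involves only the $\ell^2$ entries lying in the chosen columns; the remaining $\ell(\ell'-\ell) = \delta - \ell^2$ entries of a point $P\in \mathbb{A}^\delta(\F)$ are free parameters that do not affect $\mathcal{M}(P)$.

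Consequently, writing each $P \in \mathbb{A}^\delta(\F)$ as a pair $(Y_0, Z_0)$ with $Y_0 \in M_\ell(\F)$ (the selected block) and $Z_0 \in M_{\ell, \ell' - \ell}(\F)$ (the complementary block), we obtain
$$
\w(\Ev(\mathcal{M})) = \#\{P \in \mathbb{A}^\delta(\F) : \mathcal{M}(P) \neq 0\} = q^{\delta - \ell^2} \cdot \#\{Y_0 \in M_\ell(\F) : \det(Y_0) \neq 0\}.
$$
The last factor is just $\#\GL_\ell(\F)$, which is the number of ordered bases of $\F^\ell$ and therefore equals $\prod_{i=0}^{\ell-1}(q^\ell-q^i)$ by the standard count (pick the first column from $q^\ell - 1$ nonzero vectors, then the second from the complement of a line, etc.).

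There is no real obstacle here, since the argument is purely combinatorial once one notices that an $\ell\times \ell$ minor splits the coordinates of $\mathbb{A}^\delta(\F)$ into a ``used'' $\ell\times \ell$ block and an ``unused'' $\ell\times (\ell'-\ell)$ block. The only point that would deserve a sentence of justification is that $\mathcal{M}\ne 0$ in $\F[X]$ (needed to invoke $\Ev$'s injectivity on $\mathcal{F}(\ell,m)$ and get the upper bound on $d(\ell,m)$), which is immediate because $\mathcal{M}$ is a single element of the linearly independent set $\Delta(\ell,m)$.
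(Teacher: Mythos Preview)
Your proof is correct and follows essentially the same approach as the paper: reduce to the leading maximal minor by relabeling columns, observe that only the $\ell\times\ell$ block matters while the remaining $\delta-\ell^2$ entries are free, and count $\#\GL_\ell(\F)=\prod_{i=0}^{\ell-1}(q^\ell-q^i)$. The paper's version is slightly terser but the argument is the same.
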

\begin{proof}
Without loss of generality we shall assume that ${\mathcal M}$ is the leading maximal minor, i.e., ${\mathcal M}=\det((X_{ij})_{1\le i,j\le \ell})$. 
Let $P \in \Adelta$ and let $(p_{ij})_{1\le i \le \ell, \; 1\le j \le \ell'}$ be the $\ell \times \ell'$ matrix with entries in $\F$ corresponding to $P$. It is  clear that ${\mathcal M}(P)\neq 0$ if and only if the $\ell \times \ell$ submatrix $(p_{ij})_{1\le i,j \le \ell}$ is nonsingular. This happens for exactly $\prod_{i=0}^{\ell-1}(q^\ell-q^i)$ values of $p_{ij}$ with $1\le i,j \le \ell$. The remaining $\ell\ell' - \ell^2$ values $p_{ij}$ with $j>\ell$ do not play any role in the evaluation of ${\mathcal M}$ at $P$. Hence 
$\w(\Ev({\mathcal M}))=q^{(\delta-\ell^2)}\prod_{i=0}^{\ell-1}(q^\ell-q^i)$. This 
implies the desired inequality for $d(\ell,m)$. 
\end{proof}

\medskip

We will show that the upper bound for $d(\ell,m)$ in the above lemma gives, in fact, the true minimum distance. 
To this end, the specialization maps defined below will be useful. 

\begin{definition}\label{def:spec}
Let $i, j$ be integers satisfying $1 \le i \le \ell$ and $1 \le j \le \ell'$,  and let $\mathbf{a} = \left(a_{1}, \dots , a_{\ell'}\right) \in \F^{\ell'}$ and $\mathbf{b} = \left(b_{1}, \dots , b_{\ell}\right) \in \F^{\ell}$. The row-wise specialization map relative to $\mathbf{a}$ and $i$ is the map
$$
s_{\mathbf{a}}^{(i)}:  {\mathcal F}(\ell,m)   \to  {\mathcal F}(\ell-1,m-1) \quad \mbox{defined by} \quad 
s_{\mathbf{a}}^{(i)}(f):= f\left|_{{\mathbf{X}}_i={\mathbf{a}}},\right.
$$ 
i.e., $s_{\mathbf{a}}^{(i)}(f)$ is the element of ${\mathcal F}(\ell-1,m-1)$ obtained by substituting $\left(X_{i1}, \dots , X_{i\ell'}\right) = \left(a_{1}, \dots , a_{\ell'}\right)$ in 
$f\left(X_{11}, \dots , X_{\ell\ell'}\right)$. 
Further, if $\ell'>\ell$, then 
the column-wise specialization map relative to $\mathbf{b}$ and $j$ is the map
$$
t_{\mathbf{b}}^{(j)}:  {\mathcal F}(\ell,m)   \to  {\mathcal F}(\ell,m-1) \quad \mbox{defined by} \quad 
t_{\mathbf{b}}^{(j)}(f):= f\left|_{{\mathbf{X}}^j={\mathbf{b}}},\right.
$$ 
i.e., $t_{\mathbf{b}}^{(j)}(f)$ is the element of ${\mathcal F}(\ell,m-1)$ obtained by substituting $\left(X_{1j}, \dots , X_{\ell j}\right) = \left(b_{1}, \dots , b_{\ell}\right)$ in $f$. 
It may be noted that $s_{\mathbf{a}}^{(i)}$ and  $t_{\mathbf{b}}^{(j)}$ are $\F$-linear maps. 
\end{definition}

%

\begin{lemma}\label{lem:spec}
Let $f \in {\mathcal F}(\ell,m)$ and let $i, j$ be integers satisfying $1 \le i \le \ell$ and $1 \le j \le \ell'$. 
Then 
\begin{equation}
\label{eq:RowSpecWt}
\w(\Ev(f))=\sum_{{\mathbf{a}} \in \F^{\ell'}}\w\big(\Ev(s_{\mathbf{a}}^{(i)}(f))\big).
\end{equation}
Moreover, if $\ell'>\ell$, then 
\begin{equation}
\label{eq:ColSpecWt}
\w(\Ev(f))=\sum_{{\mathbf{b}} \in \F^{\ell}}\w\big(\Ev(t_{\mathbf{b}}^{(j)}(f))\big).
\end{equation}
\end{lemma}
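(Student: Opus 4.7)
The proof is essentially a bookkeeping argument based on partitioning $\Adelta$ according to the value of the specialized row or column.

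My plan is as follows. First I would record (or verify) that the specialization maps really do land in $\mathcal{F}(\ell-1,m-1)$ and $\mathcal{F}(\ell,m-1)$ respectively: substituting constants for the entries of the $i$-th row of $X$ sends any minor of $X$ involving that row, via cofactor expansion along the $i$-th row, into an $\F$-linear combination of minors of the $(\ell-1)\times \ell'$ matrix obtained by deleting row $i$, and minors of $X$ not involving row $i$ are already such minors; this shows that $s_{\mathbf{a}}^{(i)}(f) \in \mathcal{F}(\ell-1,m-1)$. An entirely analogous cofactor expansion along column $j$ handles $t_{\mathbf{b}}^{(j)}$, and it is at this point that one needs $\ell' > \ell$, since otherwise the target space $\mathcal{F}(\ell,m-1)$ would violate the standing convention $\ell \le \ell'$ and the evaluation map on it would not be defined in the way we need.

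The heart of the argument is then the partition
\[
\Adelta \;=\; \bigsqcup_{\mathbf{a} \in \F^{\ell'}} W_{\mathbf{a}}^{(i)}, \qquad W_{\mathbf{a}}^{(i)} := \bigl\{\, P \in \Adelta \;:\; \mathbf{X}_i(P) = \mathbf{a}\,\bigr\}.
\]
For each $\mathbf{a}$, the set $W_{\mathbf{a}}^{(i)}$ is canonically in bijection with $\mathbb{A}^{(\ell-1)\ell'}(\F)$ (the affine space of $(\ell-1)\times \ell'$ matrices given by deleting the $i$-th row), and by the very definition of $s_{\mathbf{a}}^{(i)}$,
\[
f(P) \;=\; s_{\mathbf{a}}^{(i)}(f)(P') \qquad \text{for every } P \in W_{\mathbf{a}}^{(i)},
\]
where $P'$ denotes the image of $P$ under this bijection. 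Consequently the number of $P\in W_{\mathbf{a}}^{(i)}$ with $f(P)\ne 0$ is exactly $\w\bigl(\Ev(s_{\mathbf{a}}^{(i)}(f))\bigr)$. Summing over $\mathbf{a} \in \F^{\ell'}$ yields \eqref{eq:RowSpecWt}.

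For \eqref{eq:ColSpecWt} I would carry out the exact same argument with the partition of $\Adelta$ by the value of the $j$-th column, using the bijection between $\{P\in \Adelta : \mathbf{X}^j(P)=\mathbf{b}\}$ and $\mathbb{A}^{\ell(\ell'-1)}(\F)$ together with $f(P) = t_{\mathbf{b}}^{(j)}(f)(P')$. I don't expect any genuine obstacle here; the only place requiring care is ensuring the codomain of the specialization map is a legitimate $\mathcal{F}(\ell^*, m^*)$ with $\ell^* \le {\ell'}^*$, which in the column case forces the hypothesis $\ell' > \ell$. Everything else is a disjoint-union count.
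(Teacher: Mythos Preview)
Your argument is correct and is essentially the same as the paper's: both partition $\Adelta$ according to the value of the $i$-th row (resp.\ $j$-th column), identify each piece with the smaller affine space, and observe that the restriction of $\Ev(f)$ to that piece coincides with $\Ev(s_{\mathbf{a}}^{(i)}(f))$ (resp.\ $\Ev(t_{\mathbf{b}}^{(j)}(f))$). Your write-up is slightly more explicit about the bijection and about why the specialization lands in the correct $\mathcal{F}$-space, but there is no substantive difference in approach.
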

\begin{proof}
Given any $\mathbf{a}\in \F^{\ell'}$, the specialization $s_{\mathbf{a}}^{(i)}(f)$ is in ${\mathcal F}(\ell-1, m-1)$ and hence the codeword
$\Ev(s_{\mathbf{a}}^{(i)}(f))$ has $q^{\delta-\ell'}$ coordinates;  
each of these coordinates can be computed by evaluating $f$ at those points $P=(p_{ij})$ of $\Adelta$ satisfying  $(p_{i1}, \dots, p_{i\ell'})= {\mathbf{a}}$. As $\mathbf{a}$ varies over 
$\F^{\ell'}$, all the $q^{\delta}$ coordinates of $\Ev(f)$ will be accounted for. Thus the codeword $\Ev(f)$ can be partitioned into shorter codewords $\Ev(s_{\mathbf{a}}^{(i)}(f))$, ${\mathbf{a}}\in \F^{\ell'}$. This implies 
\eqref{eq:RowSpecWt}. The proof of \eqref{eq:ColSpecWt} is similar. 
\end{proof}

\medskip

We shall now consider the special case $\ell= \ell'$, i.e., $m=2\ell$. 
In this case, $X$ has a unique maximal minor and whenever it occurs in a polynomial in ${\mathcal F}(\ell,2\ell)$, 
all the submaximal minors occurring in that polynomial can be killed  by a translation. 
 
\begin{lemma}\label{lem:absorb}
Let $f \in {\mathcal F}(\ell,2\ell)$ be such that $\det(X) \in \supp(f)$ and the coefficient of $\det(X)$ in $f$ equals $1$. Then there exists a unique $\ell \times \ell$ matrix $A$ with entries in 
$\F$ such that 
$$
f=\det(X+A)+h, \quad \mbox{where }  h\in {\mathcal F}(\ell,2\ell)  \mbox{ with } \; \supp(h) \, \subseteq \bigcup_{i=0}^{\ell-2}\Delta_i(\ell,2\ell).
$$
\end{lemma}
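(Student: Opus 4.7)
The plan is to reduce the statement to Lemma~\ref{lem:xplusa} via the linear independence established in Lemma~\ref{lem:dimFml}. In the square case $\ell = \ell'$, the only maximal minor of $X$ is $\det(X)$, and the submaximal minors of $X$ are exactly $\{\det(X^{ij}) : 1 \le i,j \le \ell\}$, which form the whole set $\Delta_{\ell-1}(\ell,2\ell)$. Hence by Lemma~\ref{lem:dimFml} every $f \in \mathcal{F}(\ell,2\ell)$ with $\det(X) \in \supp(f)$ of coefficient $1$ admits a unique expression
\[
f \; = \; \det(X) \; + \; \sum_{1 \le i,j \le \ell} c_{ij}\, \det(X^{ij}) \; + \; g,
\]
where $c_{ij} \in \F$ and $\supp(g) \subseteq \bigcup_{i=0}^{\ell-2}\Delta_i(\ell,2\ell)$.

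Next, I would invoke Lemma~\ref{lem:xplusa} with $Y = X$ and a general $\ell \times \ell$ matrix $B = A = (a_{ij})$ over $\F$, which yields
\[
\det(X+A) \; = \; \det(X) \; + \; \sum_{1 \le i,j \le \ell}(-1)^{i+j} a_{ij}\, \det(X^{ij}) \; + \; h_A,
\]
with $\supp(h_A) \subseteq \bigcup_{i=0}^{\ell-2}\Delta_i(\ell,2\ell)$. Choosing $a_{ij} := (-1)^{i+j} c_{ij}$ makes the coefficients of the submaximal minors in $\det(X+A)$ agree with those of $f$. Setting $h := g - h_A$, the difference $f - \det(X+A)$ is then supported on minors of order $\le \ell-2$, proving existence.

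For uniqueness, suppose $f = \det(X+A) + h = \det(X+A') + h'$ with $A, A'$ over $\F$ and $h, h'$ both supported on $\bigcup_{i=0}^{\ell-2} \Delta_i(\ell,2\ell)$. Expanding both sides via Lemma~\ref{lem:xplusa} and comparing the coefficients of each submaximal minor $\det(X^{ij})$ yields $(-1)^{i+j} a_{ij} = (-1)^{i+j} a'_{ij}$ by the linear independence in Lemma~\ref{lem:dimFml}, so $A = A'$ and hence $h = h'$.

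The step requiring the most care is simply bookkeeping the supports: one has to verify that the coefficient of $\det(X)$ on both sides is $1$ (it is, automatically, from Lemma~\ref{lem:xplusa}), and that the terms identified as $h$ and $h_A$ genuinely lie in $\bigcup_{i=0}^{\ell-2}\Delta_i(\ell,2\ell)$. No deep obstacle appears; the lemma is essentially an application of Lemma~\ref{lem:xplusa} together with the basis property of $\Delta(\ell,2\ell)$.
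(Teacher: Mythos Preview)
Your proof is correct and follows essentially the same approach as the paper: decompose $f$ into its $\det(X)$, submaximal-minor, and lower-order parts using Lemma~\ref{lem:dimFml}, then invoke Lemma~\ref{lem:xplusa} and choose $a_{ij} = (-1)^{i+j} c_{ij}$ to match the submaximal coefficients. You even supply the uniqueness argument explicitly, which the paper's proof leaves implicit.
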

\begin{proof}
If $\ell =1$, then the desired result holds trivially with $h=0$. 
Assume that $\ell \ge 2$. 
For $1\le i,j \le \ell$, let $X^{ij}$ denote the $(\ell-1)\times(\ell-1)$ submatrix of $X$ obtained by deleting the $i$-th row and the $j$-th column, and let $b_{ij}$ denote the coefficient of $\det(X^{ij})$ in $f$.
Then 
there is  $h_1 \in {\mathcal F}(\ell,2\ell)$ such that 
$$
f=\det(X)+\sum_{1\le i,j \le \ell}b_{ij}\det(X^{ij})+h_1 \quad \mbox{and} \quad \supp(h_1) \subseteq \bigcup_{i=0}^{\ell-2}\Delta_i(\ell,2\ell). 
$$ 
Now define $a_{ij}=(-1)^{i+j}b_{ij}$ for $1\le i,j \le \ell$ and let $A$ denote the $\ell \times \ell$ matrix $(a_{ij})_{1\le i,j \le \ell}$. 
By Lemma \ref{lem:xplusa}, there is $h_2\in {\mathcal F}(\ell,2\ell)$ such that 
$$
\det(X+A)=\det(X)+\sum_{1\le i,j \le \ell}b_{ij}X^{ij}+h_2 \quad \mbox{ and } \quad \supp(h_2) \subseteq \bigcup_{i=0}^{\ell-2}\Delta_i(\ell,2\ell).
$$ 
Thus $f=\det(X+A)+h$, where $h:=h_1-h_2$, and we have the desired result. 
\end{proof}

We are now ready to prove the main result of this section. 

\begin{theorem}\label{thh:mindist}
The minimum distance $d(\ell,m)$ of the code $\C$ is given by 
\begin{equation}
\label{eq:dlm}
d(\ell,m)=q^{\delta-\ell^2}\prod_{i=0}^{\ell-1}(q^\ell-q^i).
\end{equation}
\end{theorem}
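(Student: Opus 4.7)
The upper bound comes from Lemma~\ref{lem:upper}, so the task is to prove $w(\mathrm{Ev}(f)) \ge q^{\delta - \ell^2}\prod_{i=0}^{\ell-1}(q^\ell - q^i)$ for every nonzero $f \in \mathcal{F}(\ell,m)$. My plan is a double induction, primary on $\ell$ and secondary on $\ell'$; the base $\ell = 1$ follows from the hyperplane calculation in Remark~\ref{rem:dethyper}. The target formula satisfies the clean recursion $d(\ell,m) = q^\ell\,d(\ell,m-1)$ when $\ell' > \ell$, so the column-wise specialisation identity \eqref{eq:ColSpecWt} is the natural engine for the rectangular case, whereas the ``square'' case $\ell' = \ell$ has no column to peel off and must instead be attacked via Lemma~\ref{lem:absorb}.

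For the rectangular step $\ell' > \ell$, write $f = f_0 + \sum_{i=1}^{\ell} X_{i\ell'} f_i$, where $f_0 \in \mathcal{F}(\ell,m-1)$ is the part not involving column $\ell'$ and $f_i \in \mathcal{F}(\ell-1,m-2)$ is the cofactor coefficient of $X_{i\ell'}$. If every $f_i$ with $i \ge 1$ vanishes, then $\mathrm{Ev}(f)$ consists of $q^\ell$ identical copies of the shorter codeword for $f_0$, and the secondary induction yields $w(\mathrm{Ev}(f)) = q^\ell\,w_{\mathcal{F}(\ell,m-1)}(f_0) \ge q^\ell\,d(\ell,m-1) = d(\ell,m)$. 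Otherwise some $f_i$ is nonzero; then for each $P$ in the smaller affine space $\F^{\ell\times(\ell'-1)}$, the function $\mathbf{b}\mapsto f(P,\mathbf{b})$ is $\F$-affine in $\mathbf{b} = (b_1,\dots,b_\ell)$, and its fibrewise zero count equals $q^\ell$, $q^{\ell-1}$, or $0$ according to the vanishing pattern of $\bigl(f_0(P),f_1(P),\dots,f_\ell(P)\bigr)$. Partitioning $\F^{\ell\times(\ell'-1)}$ into the three corresponding strata and bounding the ``bad'' strata using the outer inductive hypothesis applied to each $f_i \in \mathcal{F}(\ell-1,m-2)$ will then suffice.

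The main obstacle is the square case $\ell' = \ell$, $m = 2\ell$. If $\det(X) \notin \mathrm{supp}(f)$, then $f$ is supported on minors of order $\le \ell - 1$, and a row-wise slicing reduces the problem to the outer induction. If $\det(X) \in \mathrm{supp}(f)$ then, after rescaling, Lemma~\ref{lem:absorb} allows me to write $f = \det(X + A) + h$ with $\mathrm{supp}(h) \subseteq \bigcup_{i=0}^{\ell-2}\Delta_i(\ell, 2\ell)$. The affine translation $Y = X + A$ preserves weights and, upon iterating Lemma~\ref{lem:xplusa} on each summand of $h$, transforms $f$ into $\det(Y) + \tilde h(Y)$ with $\tilde h$ still supported only on minors of order $\le \ell - 2$. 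Expanding along the last row of $Y$ gives $f(Y) = h_0(Y') + \sum_{j=1}^\ell Y_{\ell j}\,c_j(Y')$, where $Y'$ is the top $(\ell-1)\times\ell$ block and $c_j(Y') = (-1)^{\ell+j}M_j(Y') + h_j(Y')$, with $M_j(Y')$ the $(\ell-1)\times(\ell-1)$ minor obtained from $Y'$ by deleting column $j$. The hard step is to bound the number of $Y'$ on which every $c_j$ vanishes simultaneously: each $c_j$ lies in $\mathcal{F}(\ell-1,2\ell-1)$ so by outer induction its own zero locus is small, but controlling the simultaneous vanishing requires more. The key is that the leading terms $M_j(Y')$ are precisely the maximal minors of the $(\ell-1)\times\ell$ matrix $Y'$, so their joint behaviour is governed by the rank of $Y'$; combining this rank constraint with the outer inductive bounds on the $c_j$ should drive the total zero count of $f$ down to exactly $q^{\ell^2} - |\mathrm{GL}_\ell(\F)|$, as required.
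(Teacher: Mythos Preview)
Your overall architecture---induct, peel off a column (or row), and control the fibrewise affine function---is reasonable, but both branches where you leave the argument at the level of ``will then suffice'' contain a genuine shortfall, and the paper avoids exactly these difficulties by a single observation you are missing.

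\medskip
\textbf{The rectangular ``otherwise'' branch does not close.} With your notation, suppose $f_0=0$ and only one cofactor, say $f_1$, is nonzero (this situation does arise: take $f$ to be a maximal minor involving column~$\ell'$). Then the number of nonzeros of $f$ equals $(q^{\ell}-q^{\ell-1})\cdot\#\{P:f_1(P)\neq 0\}$. Since $f_1\in\mathcal F(\ell-1,m-2)$ depends only on the $(\ell-1)(\ell'-1)$ entries of $P$ outside row~$1$, your outer induction gives $\#\{P:f_1(P)\neq0\}\ge q^{\ell'-1}d(\ell-1,m-2)$, hence
\[
\w(\Ev(f))\ \ge\ (q^{\ell}-q^{\ell-1})\,q^{\ell'-1}\,d(\ell-1,m-2)\ =\ \frac{q^{\ell}-q^{\ell-1}}{q^{\ell}-1}\,d(\ell,m).
\]
The factor $(q^{\ell}-q^{\ell-1})/(q^{\ell}-1)$ is strictly less than $1$ for $\ell\ge 2$, so the bound misses the target. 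Using several $f_i$ simultaneously does not obviously help, because each $f_i$ lives on a different $(\ell-1)\times(\ell'-1)$ block and there is no immediate control on the intersection of their zero sets. Your proposal gives no indication of how to recover the missing factor.

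\medskip
\textbf{The square case is only conjectured.} After translation you reach $f=\det(Y)+\tilde h$ with $\supp(\tilde h)\subseteq\bigcup_{i\le\ell-2}\Delta_i$ and then expand along the last row to get coefficients $c_j(Y')=(-1)^{\ell+j}M_j(Y')+h_j(Y')$. You correctly identify the simultaneous vanishing locus $\{Y':c_1=\cdots=c_\ell=0\}$ as the crux, but ``combining this rank constraint with the outer inductive bounds on the $c_j$'' is not a proof. Indeed, the rank of $Y'$ only governs the $M_j$, and the lower-order perturbations $h_j$ can in principle move the vanishing locus in ways your outer induction on individual $c_j$ does not capture. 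Nothing in your outline controls this.

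\medskip
\textbf{What the paper does instead.} The paper's proof dispenses with both of these stratification arguments by one device: rather than always specialising the last column (or last row), it first picks a minor $\mathcal M\in\supp(f)$ of \emph{maximal} order~$r$, and then chooses a column (if $\ell'>\ell$) or a row (if $\ell'=\ell$ and $\det X\notin\supp(f)$) \emph{not appearing in}~$\mathcal M$. Because $r\le\ell<\ell'$ in the rectangular case, and $r\le\ell-1<\ell$ in the square subcase, such a choice always exists. Specialising that column (or row) to any value then leaves $\mathcal M$ untouched---no minor of higher order exists to cancel it---so the specialised polynomial is \emph{never} zero, and one line gives $\w(\Ev(f))\ge q^{\ell}d(\ell,m-1)=d(\ell,m)$ (resp.\ $\ge q^{\ell}d(\ell-1,m-1)>d(\ell,2\ell)$). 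In the remaining square subcase $\det X\in\supp(f)$, after applying Lemma~\ref{lem:absorb} one has $f=\det(X+A)+h$ with $\deg h\le\ell-2$; if $h\ne0$ the same maximal-minor trick applied to $h$ produces a row whose specialisation keeps $h$ nonzero of degree $\le\ell-2$, whereas the specialised $\det(X+A)$ is either zero or of degree $\ell-1$, so again the specialisation of $f$ is never zero. This degree-gap observation replaces your entire simultaneous-vanishing analysis.
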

\begin{proof}
We prove the theorem by induction on $m$. Note that $m\ge 2$ since $1\le \ell \le \ell'$. If $m=2$, then $\ell=\ell'=1$ and $\delta=1$, and so \eqref{eq:dlm} follows from the observations in Remark \ref{rem:dethyper} in this case. 
Now suppose $m>2$ and the theorem is true for all codes $C^{\mathbb A}(\ell,m-1)$, with $1 \le \ell \le \lfloor (m-1)/2 \rfloor$. 
We will prove \eqref{eq:dlm} by considering separately the cases $\ell < \ell'$ and $\ell = \ell'$. 

\smallskip

{\bf Case 1:} $\ell<\ell'$. Let $f \in {\mathcal F}(\ell,m)$ and suppose $f \neq 0$. Then $\supp (f)$ is nonempty. Choose a minor ${\mathcal M} \in \supp(f)$ of the maximum possible order, say $r$, in the sense that ${\mathcal M} \in \Delta_r(\ell,m)$ and $\Delta_s(\ell,m) \cap \supp(f) =\emptyset$ for all $s>r$. Since $r\le \ell < \ell'$, there exists a column ${\mathbf{X}}^j$ of $X$ such that the variables $X_{1j},\dots,X_{\ell j}$ do not occur in $\mathcal M$. 
In particular, $t_{\mathbf{b}}^{(j)}({\mathcal M})={\mathcal M}$ for any ${\mathbf{b}} \in \F^{\ell}$. Since $\mathcal M$ is of maximum order 
in $\supp (f)$, this implies that $t_{\mathbf{b}}^{(j)}(f)$ is not the zero polynomial and therefore the codeword $\Ev(t_{\mathbf{b}}^{(j)}(f))$ is nonzero for any ${\mathbf{b}} \in \F^{\ell}$. Consequently,  by Lemma \ref{lem:spec} and the induction hypothesis, we see that 
\begin{eqnarray*}
\w(\Ev(f))  & = & \sum_{{\mathbf{b}} \in \F^{\ell}}\w(\Ev(t_{\mathbf{b}}^{(j)}(f))) \\
            & \ge & q^{\ell}d(\ell,m-1) \\
            & = & q^{\ell}q^{(\ell'-1)\ell-\ell^2}\prod_{i=0}^{\ell-1}(q^\ell-q^i) \\
             & = & q^{\delta -\ell^2}\prod_{i=0}^{\ell-1}(q^\ell-q^i).
\end{eqnarray*}
Since the above holds for any nonzero $f \in {\mathcal F}(\ell,m)$, we obtain 
$$
d(\ell,m) \ge q^{\delta-\ell^2}\prod_{i=0}^{\ell-1}(q^\ell-q^i).
$$
This inequality together with Lemma \ref{lem:upper} establishes the induction step.

\smallskip

{\bf Case 2:} $\ell=\ell'$. In this case $m=2\ell$ and 
$X$ has only one $\ell\times\ell$ minor, namely ${\mathcal L}:= \det(X)$. Let $f \in {\mathcal F}(\ell,2\ell)$ be a nonzero polynomial. 
We will distinguish two subcases 
depending on whether or not the $\ell\times\ell$ minor ${\mathcal L}$ occurs in $f$. 

\smallskip

{\bf Subcase 1:}  ${\mathcal L}\not\in\supp(f)$. 
In this event, by a similar reasoning as in Case 1, there exists a row, say the $i$-th row, such that 
$s_{\mathbf{a}}^{(i)}(f) \neq 0$ for all ${\mathbf{a}} \in \F^{\ell}$. 
Consequently,  by Lemma \ref{lem:spec} and the induction hypothesis, we see that
\begin{eqnarray*}
\w(\Ev(f))  & = & \sum_{{\mathbf{a}} \in \F^{\ell}}\w(\Ev(s_{\mathbf{a}}^{(i)}(f))) \\
            & \ge & q^{\ell}d(2\ell-1,\ell-1) \\
            & = &q^{\ell}q^{(\ell-1)\ell-(\ell-1)^2}\prod_{i=0}^{\ell-2}(q^{\ell-1}-q^i)\\
            &  = & 
            q^{\ell}\prod_{i=0}^{\ell-2}(q^{\ell}-q^{i+1}) \\
            & >  & \prod_{i=0}^{\ell-1}(q^{\ell}-q^{i}).
\end{eqnarray*}
Thus from Lemma \ref{lem:upper}, we conclude that $\Ev(f)$ cannot be a minimum weight codeword of $\C$ if $\det(X) \not\in \supp(f)$.


\smallskip

{\bf Subcase 2:}  ${\mathcal L} \in\supp(f)$. 
In this event, by Lemma \ref{lem:absorb} there exists a $\ell \times \ell$ matrix $A$ with entries in $\F$ such that $f=\det(X+A)+h$, where $h$ is a $\F$-linear combination of $i \times i$ minors of $X$ with $0\le i \le \ell-2$. If $h=0$, then $f=\det(X+A)$ and since $\Ev(f)$ is obtained by evaluating $f$ at all points of $\Adelta$,  we see that $\w(\Ev(\det(X+A)))=\w(\Ev(\det(X)))$; hence, by Lemma \ref{lem:upper}, we then find that 
$
\w(\Ev(\det(X+A)))=\prod_{i=0}^{\ell-1}(q^{\ell}-q^{i}) = d(\ell, 2\ell).
$ 
Now suppose $h \neq 0$. Then $\ell \ge 2$ and as in Case 1, we can choose a  minor ${\mathcal M} \in \supp(h)$ of maximum order, say $r$ with $r\le \ell -2$,  and find an integer $i$ with $1\le i \le \ell$ such that $s_{\mathbf{a}}^{(i)}({\mathcal M})={\mathcal M}$ for all ${\mathbf{a}} \in \F^{\ell}$. 
Since $\mathcal M$ is of maximum order in $\supp (h)$, we see that $s_{\mathbf{a}}^{(i)}(h)\neq 0$ for all ${\mathbf{a}} \in \F^{\ell}$. Also, since 
$r\le \ell -2$, the nonzero polynomial $s_{\mathbf{a}}^{(i)}(h)$ is of (total) degree at most $\ell -2$. On the other hand,  by developing the resulting determinant along the $i$-th row, we see that $s_{\mathbf{a}}^{(i)}(\det(X+A))$ 
is either the zero polynomial or a nonzero 
polynomial in $\F[X]$ of degree $\ell -1$. 
It follows that $s_{\mathbf{a}}^{(i)}(f)\neq 0$ for all ${\mathbf{a}} \in \F^{\ell}$. Now, proceeding as in Subcase 1, we see that 
$\w(\Ev(f)) > \prod_{i=0}^{\ell-1}(q^{\ell}-q^{i})$, and so 
from Lemma \ref{lem:upper} we conclude that $\Ev(f)$ cannot be a minimum weight codeword. 
%

Thus we have shown that $d(\ell,2\ell)=\prod_{i=0}^{\ell-1}(q^{\ell}-q^{i})$ and therefore established the induction step in Case 2.   This completes the proof. 
\end{proof}

\medskip

Using the $q$-factorial function $[d]_q!:=\prod_{i=1}^{d}(q^i-1)$,  
the formula \eqref{eq:dlm} for the minimum distance of $\C$ can be more compactly written as follows. 
\begin{equation}
\label{eq:dlm2}
d(\ell,m)=q^{\delta- {\binom{\ell+1}{2}}}[\ell]_q!.
\end{equation}
Note that if $\ell=1$, then the formula \eqref{eq:dlm} as well as \eqref{eq:dlm2} for $d(\ell,m)$ is in agreement with the observations in Remark \ref{rem:dethyper}. 

\begin{remark}\label{rem:mis2l}
By analyzing the proof of Theorem \ref{thh:mindist} in greater detail, one can show that if $\ell=\ell'$, then the minimum weight codewords 
of $\C$ arise precisely from nonzero constant multiples of translates of the unique maximal minor, i.e., from polynomials of the form 
$\lambda \det(X+A)$, with $0\ne \lambda \in \F$ and $A$ an $\ell \times \ell$ matrix with entries in $\F$. Consequently, the number of minimum weight codewords in $C^{\mathbb A}(\ell,2\ell)$ is
equal to  $(q-1)q^{\ell^2}$. A more general version of these results will be proved 
in Sections \ref{sec:char} and \ref{sec:enum}.  
\end{remark}

\section{Automorphisms}
\label{sec:automorph}
Recall that the \emph{(permutation) automorphism group} $ \Aut(C)$ of a code $C\subseteq \F^n$ is the set of all permutations $\sigma$ of $\{1,\dots,n\}$ such that $(c_{\sigma(1)},\dots,c_{\sigma(n)}) \in C$ for all $c=(c_1,\dots,c_n) \in C$. Evidently, $\Aut(C)$ is a subgroup of the symmetric group on $\{1,\dots,n\}$. In this section, we shall show that the automorphism groups of affine Grassmann codes are large; more precisely, we shall show that $\Aut\left(\C\right)$ contains a subgroup of order 
\begin{equation}
\label{eq:orderAut}
q^{\delta}\prod_{i=0}^{\ell-1}(q^\ell-q^i) = n \prod_{i=0}^{\ell-1}(q^\ell-q^i) = q^{\ell^2}d(\ell,m) ,
\end{equation}
where $n$ and $d(\ell,m)$ denote the length and the minimal distance of $\C$.

Denote, as usual, by $\GL_r(\F)$ the set of all invertible $r \times r$ matrices with entries in $\F$ and by $M_{r\times s}(\F)$ the set of all $r \times s$ matrices with entries in $\F$. 
Let $A \in \GL_{\ell'}(\F)$ and ${\bf u} \in M_{\ell \times \ell'}(\F)$. Define
$$
\phi_{{\bf u},A}: \Adelta \to \Adelta
$$
to be the linear transformation given by 
$$
\phi_{{\bf u},A}(P)=PA^{-1}+{\bf u} \quad \mbox{ for }  P=(p_{ij})_{1\le i \le \ell, \; 1\le j \le \ell'}\in \Adelta,
$$ 
It is clear that the transformation 
$\phi_{{\bf u},A}$ gives a bijection of $\Aff^\delta = \Adelta$ onto itself, and hence 
$\big(f(\phi_{{\bf u},A}(P))\big)_{P\in \Aff^\delta}$ will be a permutation of $(f(P))_{P\in \Aff^\delta}$ for any $f\in \F[X]$; we shall denote
this permutation $\sigma_{{\bf u},A}$. 

\begin{lemma}\label{lem:sigmauA}
Let $A \in \GL_{\ell'}(\F)$ and ${\bf u} \in M_{\ell \times \ell'}(\F)$. Then $\sigma_{{\bf u},A} \in \Aut\left(\C\right)$.
\end{lemma}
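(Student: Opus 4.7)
My plan is to show that composition with $\phi_{{\bf u},A}$ sends $\mathcal{F}(\ell,m)$ into itself; once this is established, the evaluation codeword of $f \circ \phi_{{\bf u},A}$ is exactly the $\sigma_{{\bf u},A}$-permutation of $\Ev(f)$, so $\sigma_{{\bf u},A}$ lies in $\Aut(\C)$. Concretely, for $f\in \F[X]$ let $\tilde f \in \F[X]$ be the polynomial obtained by substituting the $(i,j)$-entry of $XA^{-1}+{\bf u}$ for each variable $X_{ij}$. Then for every $P\in \Adelta$ one has $\tilde f(P)=f(\phi_{{\bf u},A}(P))$, so $\Ev(\tilde f)$ is the image of $\Ev(f)$ under $\sigma_{{\bf u},A}$. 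The task therefore reduces to proving $\tilde f \in \mathcal{F}(\ell,m)$ whenever $f\in \mathcal{F}(\ell,m)$, and by $\F$-linearity it suffices to verify this for $f=\mathcal{M}\in\Delta(\ell,m)$.

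Fix such an $\mathcal{M}$, say the $r\times r$ minor on row-set $R\subseteq\{1,\dots,\ell\}$ and column-set $C\subseteq\{1,\dots,\ell'\}$. Writing $V$ for the $r\times r$ submatrix of $XA^{-1}$ on rows $R$ and columns $C$, and $W$ for the corresponding submatrix of the constant matrix ${\bf u}$, one has $\tilde{\mathcal{M}}=\det(V+W)$. I would first peel off the translation: exactly the multilinearity expansion used in the proof of Lemma \ref{lem:xplusa} (which works verbatim for $r\times r$ matrices, not just $\ell\times\ell$ ones) shows that
$$
\det(V+W)\;=\;\sum_{S\subseteq\{1,\dots,r\}} \det\bigl((V,W)_S\bigr),
$$
and each summand is, by Laplace expansion along the columns indexed by $S$, an $\F$-linear combination of minors of $V$ of order $r-\#S$. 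Thus it is enough to show that every minor of the matrix $XA^{-1}$ lies in $\mathcal{F}(\ell,m)$.

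For this I invoke Cauchy--Binet (Lemma \ref{lem:CB}). An arbitrary $s\times s$ minor of $XA^{-1}$, say on rows $R'\subseteq\{1,\dots,\ell\}$ and columns $C'\subseteq\{1,\dots,\ell'\}$, is $\det(X_{R'}\cdot (A^{-1})^{C'})$, where $X_{R'}$ is the $s\times\ell'$ submatrix of $X$ on the rows $R'$ and $(A^{-1})^{C'}$ is the $\ell'\times s$ submatrix of $A^{-1}$ on the columns $C'$. Lemma \ref{lem:CB} then yields
$$
\det\bigl(X_{R'}(A^{-1})^{C'}\bigr)\;=\;\sum_{\substack{I\subseteq\{1,\dots,\ell'\}\\ \#I=s}}\det\bigl((X_{R'})^I\bigr)\,\det\bigl(((A^{-1})^{C'})_I\bigr),
$$
in which each $\det((X_{R'})^I)$ is an $s\times s$ minor of $X$, hence an element of $\Delta_s(\ell,m)$, and the coefficients $\det(((A^{-1})^{C'})_I)$ lie in $\F$. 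Combining this with the previous paragraph shows $\tilde{\mathcal{M}}\in \mathcal{F}(\ell,m)$, and extending by linearity completes the argument.

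The only step that requires any real care is bookkeeping the row/column index sets through the Cauchy--Binet expansion and making sure the multilinearity expansion of Lemma \ref{lem:xplusa} is applied to the $r\times r$ submatrix $V+W$ rather than a full maximal minor; neither presents a genuine obstacle, since both lemmas have already been set up for exactly this purpose. Note also that invertibility of $A$ is not used in this direction—it only enters when one wants $\phi_{{\bf u},A}$ to be a bijection of $\Adelta$, so that $\sigma_{{\bf u},A}$ is genuinely a permutation rather than merely an $\F$-linear endomorphism of the code.
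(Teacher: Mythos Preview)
Your proof is correct and follows essentially the same route as the paper: reduce to showing that $f(XA^{-1}+{\bf u})\in\mathcal{F}(\ell,m)$ for $f\in\mathcal{F}(\ell,m)$, then check this on minors via multilinearity. Your version is in fact more careful than the paper's, which simply invokes Lemma~\ref{lem:xplusa} for the whole claim; you rightly separate the translation step (handled by the Lemma~\ref{lem:xplusa} expansion) from the linear part $X\mapsto XA^{-1}$ and supply Cauchy--Binet for the latter, a detail the paper leaves implicit.
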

\begin{proof}
Let $r$ be any integer with $0\le r \le \ell$. In view of Lemma \ref{lem:xplusa}, 
a $r\times r$ minor of $XA^{-1}+{\bf u}$ is a $\F$-linear combination of $i\times i$ minors of $X$, where $0\le i \le r$. Consequently, 
if $f= f(X) \in {\mathcal F}(\ell,m)$, then $f(XA^{-1}+{\bf u})\in {\mathcal F}(\ell,m)$. Moreover, 
$$
\sigma_{{\bf u},A}\left(\Ev(f) \right)= \big(f(\phi_{{\bf u},A}(P))\big)_{P\in \Adelta}=\Ev\left(f(XA^{-1}+{\bf u})\right).
$$
It follows that $\sigma_{{\bf u},A} \in \Aut(C)$, where $C = \C = \Ev \left({\mathcal F}(\ell,m)\right)$.
\end{proof} 

Observe that $\phi_{\mathbf{0}, I}$ is the identity transformation of $\Aff^{\delta}$, where $\mathbf{0}$ denotes the zero matrix in 
$M_{\ell \times \ell'}(\F)$ and $I$ the identity matrix in $\GL_{\ell'}(\F)$. Moreover, given any $A, B \in \GL_{\ell'}(\F)$ and 
${\bf u}, \mathbf{v} \in M_{\ell \times \ell'}(\F)$, we have  
\begin{equation}\label{eq:phiAB}
\phi_{{\mathbf{u}},A}\circ\phi_{{\mathbf{v}},B}= \phi_{\mathbf{w},AB} 
\quad \mbox{and} \quad \phi_{{\mathbf{u}},A}^{-1} = \phi_{\mathbf{u'},A^{-1}},
\end{equation}
where $\mathbf{w}:= {\mathbf{v}}A^{-1}+{\mathbf{u}}$ and $\mathbf{u'} = -\mathbf{u}A$. This leads to the following observation-cum-definition. 

\begin{definition} 
\label{def:Slm}
The set 
$
\{\phi_{{\mathbf{u}},A} \, : \, A \in \GL_{\ell'}(\F) \mbox{ and }  {\mathbf{u}}\in M_{\ell\times\ell'}(\F)\}
$
forms a group with respect to composition of maps and this group will be denoted by ${\mathfrak G}(\ell,m)$.
\end{definition}

We determine the group structure of ${\mathfrak G}(\ell,m)$ in the following proposition.

\begin{proposition}\label{prop:groupstructure}
As a group ${\mathfrak G}(\ell,m)$ is isomorphic to the semidirect product $M_{\ell\times\ell'}(\F) \rtimes_{\theta} \GL_{\ell'}(\F)$, where the homomorphism $\theta: GL_{\ell'}(\F) \to Aut(M_{\ell\times\ell'}(\F))$ is defined by $\theta(A)(B):=BA^{-1}$.
\end{proposition}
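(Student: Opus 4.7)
The plan is to exhibit an explicit isomorphism
$$\Psi : M_{\ell\times\ell'}(\F) \rtimes_\theta \GL_{\ell'}(\F) \;\longrightarrow\; \mathfrak{G}(\ell,m), \qquad (\mathbf{u}, A) \longmapsto \phi_{\mathbf{u}, A}.$$
Essentially all of the real content is already packaged in the composition formula \eqref{eq:phiAB}; what remains is to verify that $\theta$ indeed defines a legitimate action, that $\Psi$ respects the two multiplications, and that $\Psi$ is a bijection.

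First, I would check that $\theta$ is a well-defined homomorphism from $\GL_{\ell'}(\F)$ into $\Aut(M_{\ell\times\ell'}(\F))$. Since right-multiplication by $A^{-1}$ is $\F$-linear and hence additive, each $\theta(A)$ is an additive automorphism of $M_{\ell\times\ell'}(\F)$, with inverse $\theta(A^{-1})$; and $\theta(AB)(C) = C(AB)^{-1} = CB^{-1}A^{-1} = \theta(A)\bigl(\theta(B)(C)\bigr)$ shows $\theta(AB) = \theta(A)\circ\theta(B)$. With this $\theta$, the multiplication in the semidirect product reads
$$(\mathbf{u}, A)\cdot(\mathbf{v}, B) \;=\; \bigl(\mathbf{u} + \theta(A)(\mathbf{v}),\; AB\bigr) \;=\; \bigl(\mathbf{u} + \mathbf{v}A^{-1},\; AB\bigr).$$
Comparing with \eqref{eq:phiAB}, which asserts $\phi_{\mathbf{u},A}\circ\phi_{\mathbf{v},B} = \phi_{\mathbf{u}+\mathbf{v}A^{-1},\,AB}$, one reads off directly that $\Psi\bigl((\mathbf{u},A)\cdot(\mathbf{v},B)\bigr) = \Psi(\mathbf{u},A)\circ\Psi(\mathbf{v},B)$, so $\Psi$ is a group homomorphism.

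Surjectivity of $\Psi$ is immediate from Definition \ref{def:Slm}. For injectivity, suppose $\phi_{\mathbf{u},A} = \phi_{\mathbf{v},B}$. Evaluating at $P=\mathbf{0}$ gives $\mathbf{u} = \mathbf{v}$; then $PA^{-1} = PB^{-1}$ for every $P \in \Adelta$, and letting $P$ range over the $\ell\times\ell'$ matrix units forces $A^{-1}=B^{-1}$, hence $A=B$. Thus $\Psi$ is a bijection, and the isomorphism is established. There is no serious obstacle here: the calculation \eqref{eq:phiAB} has already revealed the semidirect-product pattern, and the proof reduces to matching up the two multiplication rules and a short bookkeeping argument for bijectivity.
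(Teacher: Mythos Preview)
Your proof is correct and follows essentially the same approach as the paper: both define the map $(\mathbf{u},A)\mapsto \phi_{\mathbf{u},A}$, verify that $\theta$ is a homomorphism into $\Aut(M_{\ell\times\ell'}(\F))$, and then invoke the composition formula \eqref{eq:phiAB} to match the semidirect-product multiplication with composition in $\mathfrak{G}(\ell,m)$. If anything, you are slightly more careful than the paper in spelling out injectivity via evaluation at $P=\mathbf{0}$ and the matrix units.
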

\begin{proof}
Recall that if $G$ and $H$ are any groups, and if $\theta: H\to \Aut(G)$ is any group homomorphism, then the semidirect product $G\rtimes_{\theta} H$ of $G$ and $H$ relative to $\theta$ is the group whose underlying set is $G \times H$ 
and whose group operation is defined by $(g,h)(g',h')=\left(g\theta(h)(g'), \, hh'\right)$. In our case, $G$ is the additive group $M_{\ell\times\ell'}(\F)$ and $H$ is the multiplicative group $\GL_{\ell'}(\F)$, while
$\theta: H\to \Aut(G)$ is given by $\theta(A)({\mathbf{u}}):={\mathbf{u}}A^{-1}$. 
Now observe that $\theta(A) \in \Aut (G)$ for all $A\in H$ and $\theta(A_1A_2)=\theta(A_1)\theta(A_2)$ for all $A_1, A_2\in H$.  So $\theta$ is indeed
a homomorphism of $H$ into $\Aut (G)$. 
Moreover, in view of \eqref{eq:phiAB}, the group operation $({\mathbf{u}},A)({\mathbf{v}},B)=\left({\mathbf{u}}+{\mathbf{v}}A^{-1},AB\right)$ in $G\rtimes_{\theta} H$ is consistent with the group operation in ${\mathfrak G}(\ell,m)$. 
Thus $({\mathbf{u}},A)\mapsto \phi_{{\mathbf{u}},A}$ gives an isomorphism of 
$M_{\ell\times\ell'}(\F) \rtimes_{\theta} GL_{\ell'}(\F)$ onto ${\mathfrak G}(\ell,m)$. 
\end{proof}


\begin{theorem}
\label{thm:autom}
The automorphism group of the affine Grassmann code $\C$ contains a subgroup isomorphic to ${\mathfrak G}(\ell,m)$. In particular, 
$\# \Aut\left(\C\right)$ is greater than or equal to the quantity in \eqref{eq:orderAut}. 
\end{theorem}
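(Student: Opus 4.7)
The plan is to exhibit the subgroup explicitly as the image of the map $\Psi: {\mathfrak G}(\ell,m) \to \Aut(\C)$ sending $\phi_{\mathbf{u},A}$ to $\sigma_{\mathbf{u},A}$, verify that $\Psi$ is an injective (anti-)homomorphism, and then compare $|{\mathfrak G}(\ell,m)|$ with the quantity in \eqref{eq:orderAut}.

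Lemma \ref{lem:sigmauA} already guarantees that $\Psi$ lands in $\Aut(\C)$. For injectivity, suppose $\sigma_{\mathbf{u},A} = \sigma_{\mathbf{v},B}$. Since $\sigma_{\mathbf{u},A}$ is by construction the permutation of coordinate positions induced by the bijection $\phi_{\mathbf{u},A}$ of $\Aff^\delta$, the equality forces $\phi_{\mathbf{u},A} = \phi_{\mathbf{v},B}$ as maps; evaluating at the zero matrix gives $\mathbf{u} = \mathbf{v}$, after which $PA^{-1} = PB^{-1}$ for all $P \in \Aff^\delta$ forces $A = B$. To see the image is a subgroup of $\Aut(\C)$ isomorphic to ${\mathfrak G}(\ell,m)$, I would use the explicit formula $\sigma_{\mathbf{u},A}(\Ev(f)) = \Ev(f(XA^{-1} + \mathbf{u}))$ derived in the proof of Lemma \ref{lem:sigmauA}. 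A direct substitution shows
$$
\sigma_{\mathbf{u},A} \circ \sigma_{\mathbf{v},B}(\Ev(f)) = \Ev\bigl(f(X(BA)^{-1} + (\mathbf{u}B^{-1} + \mathbf{v}))\bigr),
$$
so $\sigma_{\mathbf{u},A} \circ \sigma_{\mathbf{v},B} = \sigma_{\mathbf{u}B^{-1}+\mathbf{v},BA}$. Hence the set of all $\sigma_{\mathbf{u},A}$ is closed under composition (and so is a subgroup), and comparison with the composition law in \eqref{eq:phiAB} reveals $\Psi$ to be an anti-homomorphism; its image is therefore isomorphic to the opposite group of ${\mathfrak G}(\ell,m)$, which is in turn isomorphic to ${\mathfrak G}(\ell,m)$ itself via the inversion map.

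Finally, by Proposition \ref{prop:groupstructure}, $|{\mathfrak G}(\ell,m)| = q^{\ell\ell'} \prod_{i=0}^{\ell'-1}(q^{\ell'} - q^i)$. Since $\ell \leq \ell'$, each factor $q^{\ell'} - q^i$ with $0 \leq i \leq \ell - 1$ satisfies $q^{\ell'} - q^i \geq q^{\ell} - q^i > 0$, and the remaining factors with $\ell \leq i \leq \ell' - 1$ are positive, so
$$
|{\mathfrak G}(\ell,m)| \geq q^\delta \prod_{i=0}^{\ell-1}(q^\ell - q^i),
$$
which is precisely the quantity in \eqref{eq:orderAut}. The main subtlety is the composition computation for the $\sigma$'s: the algebra is routine, but one must track the order of composition carefully to see that $\Psi$ is an anti-homomorphism rather than a homomorphism, and then invoke the self-duality of any group under inversion to conclude that the image is genuinely isomorphic to ${\mathfrak G}(\ell,m)$.
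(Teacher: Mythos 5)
Your proposal follows essentially the same strategy as the paper: invoke Lemma~\ref{lem:sigmauA} for the fact that each $\sigma_{\mathbf{u},A}$ lies in $\Aut(\C)$, establish injectivity of the induced map $\Psi$, and compare cardinalities. Your injectivity argument is a clean variant (you observe directly that $\sigma_{\mathbf{u},A}$ determines $\phi_{\mathbf{u},A}$, then pin down $\mathbf{u}$ and $A$; the paper reasons through the triviality of the kernel by evaluating polynomials at special points), and your explicit verification that $\#{\mathfrak G}(\ell,m)$ dominates the quantity in \eqref{eq:orderAut} fills in a step the paper leaves to the reader.

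However, your assertion that $\Psi$ is an anti-homomorphism is not correct under the paper's conventions, and the detour through the opposite group is unnecessary. The paper's $\Aut(\C)$ is a subgroup of $S_n$, and $\sigma_{\mathbf{u},A}$ is, by definition, the permutation of positions obtained by transporting $\phi_{\mathbf{u},A}$ through the fixed enumeration $\iota\colon i\mapsto P_i$ of $\Adelta$, i.e.\ $\sigma_{\mathbf{u},A}=\iota^{-1}\circ\phi_{\mathbf{u},A}\circ\iota$. Conjugation by a fixed bijection is a genuine group isomorphism, so $\phi\mapsto\sigma_\phi$ is a homomorphism, exactly as the paper states. What you have computed is the composite of the induced maps on $\F^{q^\delta}$, namely $c\mapsto(c_{\sigma(1)},\dots,c_{\sigma(n)})$; this is a \emph{right} action, so those induced maps anti-compose. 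In other words, your correct formula
$$
\sigma_{\mathbf{u},A}\bigl(\sigma_{\mathbf{v},B}(\Ev(f))\bigr)=\Ev\!\left(f\bigl(X(BA)^{-1}+\mathbf{u}B^{-1}+\mathbf{v}\bigr)\right)
$$
is the action of the permutation $\sigma_{\mathbf{v},B}\,\sigma_{\mathbf{u},A}$ (composed in $S_n$), not of $\sigma_{\mathbf{u},A}\,\sigma_{\mathbf{v},B}$. Reading it this way, and comparing with \eqref{eq:phiAB}, one sees $\sigma_{\mathbf{v},B}\sigma_{\mathbf{u},A}=\sigma_{\phi_{\mathbf{v},B}\circ\phi_{\mathbf{u},A}}$, which is precisely the homomorphism property. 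Since every group is isomorphic to its opposite, your final conclusion is unaffected, but the extra layer of reasoning can be removed once the convention is tracked correctly.
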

\begin{proof}
In view of Lemma \ref{lem:sigmauA}, $\phi_{{\mathbf{u}},A} \mapsto \sigma_{{\mathbf{u}},A}$ gives a natural map from ${\mathfrak G}(\ell,m)$ into $\Aut\left(\C\right)$. It is readily seen that this map is a group homomorphism. 
So it suffices to show that this homomorphism is injective. To this end, suppose $\sigma_{{\mathbf{u}},A}$ is the identity permutation for some 
$A \in \GL_{\ell'}(\F)$ and  ${\mathbf{u}}\in M_{\ell\times\ell'}(\F)$.  
Then $\sigma_{{\mathbf{u}},A}(\Ev(f))=\Ev(f)$ for all $f \in {\mathcal F}(\ell,m)$, i.e.,
$$
f(PA^{-1}+{\mathbf{u}})=f(P) \quad \mbox{ for all $f \in {\mathcal F}(\ell,m)$ and all $P \in \Adelta$}.
$$
By choosing $P$ to be the zero matrix and letting $f$ vary over all possible $1\times 1$ minors, we find that ${\mathbf{u}}=0$. Further, by choosing $P=e_{ij}$, i.e., $P$ to be the $\ell\times\ell'$ matrix with $1$ in $(i,j)$-th position and $0$ elsewhere,
and again letting $f$ vary over all possible $1\times 1$ minors, we see that $A^{-1}$ is the identity matrix $I$. Hence $A=I$. 
\end{proof}

We leave the question of the complete determination of the automorphism group $\Aut(\C)$ open for future investigation. 

\section{Characterization of minimum weight codewords}
\label{sec:char}

In Section \ref{sec:mindist}, we have calculated the minimum distance $d(\ell,m)$ of the affine Grassmann code $\C$. In this section, we will 
give an explicit characterization of all of its codewords of weight $d(\ell,m)$. 
One of the tools utilized will be a concept involving the specialization function $s_{\mathbf{a}}^{(i)}$ from Definition \ref{def:spec}, which is defined below. 

\begin{definition}
Let $f \in {\mathcal F}(\ell,m)$ and let $i$ be an integer between $1$ and $\ell$. The $i$-th row-vanishing locus of $f$ is the set  $$
V_f^{(i)} :=\{{\mathbf{a}} \in \F^{\ell'} \, : \, s_{\mathbf{a}}^{(i)}(f)=0\}.
$$
\end{definition}

It turns out that if a polynomial in ${\mathcal F}(\ell,m)$ is changed by a translation of the underlying matrix $X$ to $X+\mathbf{u}$, 
then its $i$-th row-vanishing locus is a translate of the corresponding locus of the transformed polynomial 
by the $i$-th row of $\mathbf{u}$.    

\begin{lemma}\label{lemma:visplane}
Let $f \in {\mathcal F}(\ell,m)$ and let $i$ be an integer between $1$ and $\ell$. Then 
$$
V_f^{(i)}= {\mathbf{u}}_i + V_{\phi_{{\mathbf{u}},I}(f)}^{(i)} \quad \mbox{ for every } {\mathbf{u}} \in M_{\ell\times\ell'}(\F),
$$
where $I$ denotes the identity matrix in $\GL_{\ell'}(\F)$. 
\end{lemma}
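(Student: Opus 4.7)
The plan is to unwind the two definitions and just compute $s_{\mathbf{b}}^{(i)}\bigl(\phi_{\mathbf{u},I}(f)\bigr)$ directly. Here I interpret the notation $\phi_{\mathbf{u},I}(f)$ as the pullback polynomial $f(X+\mathbf{u})$, consistent with what appears in the proof of Lemma \ref{lem:sigmauA} (since $I^{-1}=I$). Fix $\mathbf{b}\in\F^{\ell'}$, and let $\mathbf{u}'\in M_{(\ell-1)\times\ell'}(\F)$ denote the matrix obtained from $\mathbf{u}$ by deleting the $i$-th row. Substituting $\mathbf{X}_i=\mathbf{b}$ in $f(X+\mathbf{u})$ replaces the $i$-th row of the argument $X+\mathbf{u}$ by the constant vector $\mathbf{b}+\mathbf{u}_i$, while leaving the $j$-th row (for $j\neq i$) as $\mathbf{X}_j+\mathbf{u}_j$. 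Tracking this carefully yields the identity
\begin{equation}\label{eq:keystep}
s_{\mathbf{b}}^{(i)}\bigl(\phi_{\mathbf{u},I}(f)\bigr) \;=\; \phi_{\mathbf{u}',I'}\bigl(s_{\mathbf{b}+\mathbf{u}_i}^{(i)}(f)\bigr),
\end{equation}
where $I'$ is the $\ell'\times\ell'$ identity matrix and $\phi_{\mathbf{u}',I'}$ now operates on $\mathcal{F}(\ell-1,m-1)$.

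The second step is to observe that $\phi_{\mathbf{u}',I'}$ is a bijection of $\mathcal{F}(\ell-1,m-1)$: it is an $\F$-linear endomorphism by (the analogue of) Lemma \ref{lem:sigmauA}, and its inverse is $\phi_{-\mathbf{u}',I'}$, corresponding to the polynomial substitution $X\mapsto X-\mathbf{u}'$. In particular, it sends the zero polynomial to the zero polynomial and no other polynomial to $0$. Combined with \eqref{eq:keystep}, this means that $s_{\mathbf{b}}^{(i)}(\phi_{\mathbf{u},I}(f))=0$ if and only if $s_{\mathbf{b}+\mathbf{u}_i}^{(i)}(f)=0$, i.e., $\mathbf{b}\in V_{\phi_{\mathbf{u},I}(f)}^{(i)}$ if and only if $\mathbf{b}+\mathbf{u}_i\in V_f^{(i)}$. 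Rewriting this as an equality of sets gives $V_f^{(i)}=\mathbf{u}_i+V_{\phi_{\mathbf{u},I}(f)}^{(i)}$, as required.

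The whole proof has essentially no content beyond careful bookkeeping, and the only genuine ``obstacle'' is making sure to keep track of the difference between substituting the $i$-th row to a constant and translating every entry of $X$ by the corresponding entry of $\mathbf{u}$. The identity \eqref{eq:keystep} captures the way these two operations commute up to the shift by $\mathbf{u}_i$ in the row being specialized, and once this is noted the result is immediate.
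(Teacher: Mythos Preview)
Your proof is correct and follows essentially the same idea as the paper's: both reduce to the observation that specializing $\mathbf{X}_i=\mathbf{b}$ in $f(X+\mathbf{u})$ is the same as specializing $\mathbf{X}_i=\mathbf{b}+\mathbf{u}_i$ in $f$, modulo a translation of the remaining rows, which preserves the property of being the zero polynomial. The one minor difference is that you handle this last step purely at the polynomial level via the bijection $\phi_{\mathbf{u}',I'}$ on $\mathcal{F}(\ell-1,m-1)$, whereas the paper passes through evaluation---arguing that the translated polynomial vanishes at every point of $\Aff^{(\ell-1)\ell'}$ and then invoking the injectivity of $\Ev$ on $\mathcal{F}(\ell-1,m-1)$ to conclude it is the zero polynomial. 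Your route is a touch more direct, but the content is the same.
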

\begin{proof}
Let ${\mathbf{u}} \in M_{\ell\times\ell'}(\F)$ and let $g:= \phi_{{\mathbf{u}},I}(f)$. 
Suppose $\mathbf{a}\in V_f^{(i)}$. Define $\mathbf{b}\in \F^{\ell'}$ by the relation $\mathbf{a}= \mathbf{u}_i + \mathbf{b}$.  Note that 
\begin{equation}
\label{eq:sbi}
s_{\mathbf{b}}^{(i)}\left(g\right) = g(X)|_{\mathbf{X}_i = \mathbf{b}} = f(X+ \mathbf{u})|_{\mathbf{X}_i = \mathbf{b}}.
\end{equation}  
Now $s_{\mathbf{a}}^{(i)}(f) = f|_{\mathbf{X}_i = \mathbf{a}} =0$. In particular, the polynomial $f|_{\mathbf{X}_i = \mathbf{a}}$ evaluates 
to $0$ for every specialization of the rows $\mathbf{X}_1, \dots, \mathbf{X}_{i-1}, \mathbf{X}_{i+1}, \dots , \mathbf{X}_{\ell}$ to arbitrary vectors in $\F^{\ell'}$.  Since translations by a fixed vector in $\F^{\ell'}$ give a bijection of $\F^{\ell'}$ into itself, this implies that $g|_{\mathbf{X}_i = \mathbf{b}}$  evaluates to $0$ for every specialization of the rows $\mathbf{X}_1, \dots, \mathbf{X}_{i-1}, \mathbf{X}_{i+1}, \dots , \mathbf{X}_{\ell}$ to arbitrary vectors in $\F^{\ell'}$. Hence by the injectivity of the evaluation map $\Ev: {\mathcal F}(\ell-1,m-1)\to \F^{(\ell-1)\ell'}$, we see that $g|_{\mathbf{X}_i = \mathbf{b}}$ is the zero polynomial. Thus, in view of \eqref{eq:sbi}, $\mathbf{b}\in V_g^{(i)}$, 
i.e., $\mathbf{a}\in {\mathbf{u}}_i + V_f^{(i)}$. This proves that $V_f^{(i)} \subseteq {\mathbf{u}}_i + V_g^{(i)}$. The reverse inclusion is proved similarly. 
\end{proof}

\begin{corollary}\label{cor:rowvanishingaffine}
Let $f \in {\mathcal F}(\ell,m)$ and let $i$ be an integer between $1$ and $\ell$. Then the $i$-th row-vanishing locus is either empty or an affine linear space over $\F$, i.e., either $V_f^{(i)}  =\emptyset$ or $V_f^{(i)}  = \mathbf{a}+V$ for some $\mathbf{a}\in \F^{\ell'}$ and a $\F$-linear space $V$. 
\end{corollary}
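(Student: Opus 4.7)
The plan is to exhibit $V_f^{(i)}$ as the preimage of $0$ under an affine map between two $\F$-vector spaces, from which the conclusion is immediate.

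First I would decompose $f$ along the $i$-th row. Since every minor of $X$ contains at most one entry from row $i$, a Laplace expansion of each such minor yields a unique expression
$$
f = f_0 + \sum_{j=1}^{\ell'} X_{ij}\, f_j,
$$
where $f_0$ collects the contributions from the minors in $\supp(f)$ that avoid row $i$ entirely, and each $f_j$ is an $\F$-linear combination of minors of the $(\ell-1)\times(\ell'-1)$ submatrix of $X$ obtained by deleting row $i$ and column $j$. In particular, none of $f_0, f_1, \dots, f_{\ell'}$ involves any variable from row $i$, and each belongs to $\mathcal{F}(\ell-1, m-1)$ once we identify the remaining rows with an $(\ell-1)\times\ell'$ matrix.

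For any $\mathbf{a} = (a_1,\dots,a_{\ell'}) \in \F^{\ell'}$, the definition of $s_{\mathbf{a}}^{(i)}$ then gives
$$
s_{\mathbf{a}}^{(i)}(f) = f_0 + \sum_{j=1}^{\ell'} a_j\, f_j,
$$
so the map $\Phi \colon \F^{\ell'} \to \mathcal{F}(\ell-1, m-1)$ sending $\mathbf{a}$ to $s_{\mathbf{a}}^{(i)}(f)$ is affine, being the $\F$-linear map $L \colon \mathbf{a} \mapsto \sum_j a_j f_j$ translated by the constant $f_0$. By definition $V_f^{(i)} = \Phi^{-1}(0)$, and the preimage of a point under an affine map between $\F$-vector spaces is either empty or a coset of the kernel of $L$; the latter is an $\F$-linear subspace of $\F^{\ell'}$, and so $V_f^{(i)}$ is either empty or an $\F$-affine subspace of $\F^{\ell'}$, as required.

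The only step requiring attention is verifying that $f_0, f_1, \dots, f_{\ell'}$ genuinely lie in $\mathcal{F}(\ell-1, m-1)$. This is routine once one notes that a minor of $X$ avoiding row $i$, or avoiding row $i$ together with some column $j$, is also a minor of the $(\ell-1)\times \ell'$ submatrix of $X$ formed by the remaining rows, hence an element of $\Delta(\ell-1, m-1)$. Everything else reduces to elementary linear algebra, and the proof uses no structural input beyond Laplace expansion and the definition of $s_{\mathbf{a}}^{(i)}$. (An alternative route would first invoke Lemma~\ref{lemma:visplane} to translate so that $0 \in V_g^{(i)}$ for the translated polynomial $g$, forcing the analogue of $f_0$ to vanish and reducing $\Phi$ to a purely linear map; but the direct argument above suffices.)
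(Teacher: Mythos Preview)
Your argument is correct. Both proofs rest on the same underlying fact, namely that Laplace expansion along row $i$ exhibits $s_{\mathbf{a}}^{(i)}(f)$ as an affine function of $\mathbf{a}$ with values in $\mathcal{F}(\ell-1,m-1)$; they simply package this differently. The paper first assumes $V_f^{(i)}\ne\emptyset$, invokes Lemma~\ref{lemma:visplane} to translate so that $\mathbf{0}$ lies in the locus of the transformed polynomial $g$, and then uses Lemma~\ref{lem:dimFml} together with multilinearity to conclude that $V_g^{(i)}$ is a linear subspace. Your route is more direct: you write down the decomposition $f=f_0+\sum_j X_{ij}f_j$ explicitly, identify $V_f^{(i)}$ as the fibre $\Phi^{-1}(0)$ of the affine map $\Phi(\mathbf{a})=f_0+\sum_j a_j f_j$, and appeal to the elementary fact that such a fibre is empty or a coset of $\ker L$. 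This avoids any appeal to Lemma~\ref{lemma:visplane} or Lemma~\ref{lem:dimFml} and is entirely self-contained; the paper's approach, on the other hand, is what naturally emerges once Lemma~\ref{lemma:visplane} is already in hand for later use. You even note this alternative at the end, so you are clearly aware of both routes.
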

\begin{proof}
Suppose $V_f^{(i)} \ne \emptyset$. Then there exists some $\mathbf{a}\in V_f^{(i)}$. Let $\mathbf{u} \in M_{\ell\times\ell'}(\F)$ be such that $\mathbf{u}_i =\mathbf{a}$ and $\mathbf{u}_j = \mathbf{0}$ for $1\le j \le \ell$ with $j\ne i$. Also let $g:= \phi_{{\mathbf{u}},I}(f)$. Then 
by Lemma  \ref{lemma:visplane}, $V_f^{(i)}  = \mathbf{a}+V_g^{(i)}$. It remains to show that  $V_g^{(i)}$ is a subspace of $\F^{\ell'}$. To this end, first note that $\mathbf{0} \in V_g^{(i)}$, thanks to the choice of $\mathbf{a}$. Now observe that for any minor ${\mathcal M}\in \Delta (\ell, m)$, we have 
$s_{\bf 0}^{(i)}(\mathcal M)=0$ if $\mathcal M$ involves the $i$-th row and $s_{\bf 0}^{(i)}(\mathcal M)=\mathcal M$ otherwise. Since
$s_{\bf 0}^{(i)}(g)=0$, Lemma \ref{lem:dimFml} implies that $g$ is a $\F$-linear combination of minors of $X$ that involve the $i$-th row. Hence using the multilinearity of the determinant, we readily see that $V_g^{(i)}$ is closed under addition and scalar multiplication. 
\end{proof}

The following result is an analogue of Lemma \ref{lemma:visplane} for homogeneous linear transformations of the underlying matrix. 

\begin{lemma}\label{lemma:homogvisplane}
Let $f \in {\mathcal F}(\ell,m)$ and let $i$ be an integer between $1$ and $\ell$. Then 
$$
V_{\phi_{{\mathbf{0}},A}(f)}^{(i)} = V_f^{(i)}A := \left\{{\mathbf{a}}A :\mathbf{a}\in V_f^{(i)} \right\} \quad \mbox{ for every } A \in \GL_{\ell'}(\F),
$$
where $\mathbf{0}$ denotes the zero matrix in $M_{\ell\times\ell'}(\F)$. 
\end{lemma}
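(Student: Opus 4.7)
My plan is to reduce everything to a single algebraic identity relating the two row-substitutions on either side of the equation and then exploit the invertibility of $A$. Write $g := \phi_{\mathbf{0},A}(f)$, which by the conventions established in the proof of Lemma \ref{lem:sigmauA} and Lemma \ref{lemma:visplane} is the polynomial $f(XA^{-1})$. The goal is to show $\mathbf{a}' \in V_g^{(i)}$ if and only if $\mathbf{a}'A^{-1} \in V_f^{(i)}$, since the latter condition is exactly $\mathbf{a}' \in V_f^{(i)}A$.

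The central observation is the following. Substituting $\mathbf{X}_i = \mathbf{a}'$ into the matrix $XA^{-1}$ yields a matrix whose $i$-th row is $\mathbf{a}'A^{-1}$ and whose $j$-th row (for $j\ne i$) is $\mathbf{X}_jA^{-1}$. Hence
\[
s_{\mathbf{a}'}^{(i)}(g) \;=\; f(XA^{-1})\bigl|_{\mathbf{X}_i=\mathbf{a}'} \;=\; \varphi\!\left(f\bigl|_{\mathbf{X}_i=\mathbf{a}'A^{-1}}\right) \;=\; \varphi\!\left(s_{\mathbf{a}'A^{-1}}^{(i)}(f)\right),
\]
where $\varphi$ denotes the $\F$-algebra endomorphism of $\F[\{X_{jk}:j\ne i,\; 1\le k\le \ell'\}]$ that sends $\mathbf{X}_j \mapsto \mathbf{X}_jA^{-1}$ for each $j\ne i$. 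The order of the two substitutions — first setting the $i$-th row, then transforming the remaining rows — commutes, since they act on disjoint sets of variables, which is what legitimizes the identity above.

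Because $A \in \GL_{\ell'}(\F)$, the map $\varphi$ is actually an automorphism of the polynomial ring in question, with inverse given by $\mathbf{X}_j \mapsto \mathbf{X}_jA$. In particular $\varphi$ is injective, so $\varphi(h)=0$ if and only if $h=0$. Applying this to $h = s_{\mathbf{a}'A^{-1}}^{(i)}(f) \in \mathcal{F}(\ell-1,m-1)$ gives the equivalence $s_{\mathbf{a}'}^{(i)}(g)=0 \Leftrightarrow s_{\mathbf{a}'A^{-1}}^{(i)}(f)=0$, which is precisely what is needed.

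The only delicate point I anticipate is bookkeeping: making clear that the substitution $\mathbf{X}_i = \mathbf{a}'$ inside $f(XA^{-1})$ can be performed row-by-row (and commutes with the column mixing $\mathbf{X}_j \mapsto \mathbf{X}_jA^{-1}$ on the other rows). Once this is explicitly laid out, the proof reduces to the invertibility of $\varphi$, which is immediate from $A\in\GL_{\ell'}(\F)$.
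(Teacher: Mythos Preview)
Your proof is correct and follows essentially the same route as the paper: both hinge on the observation that the rows of $XA^{-1}$ are $\mathbf{X}_1A^{-1},\dots,\mathbf{X}_\ell A^{-1}$, so specializing $\mathbf{X}_i=\mathbf{a}'$ in $g=f(XA^{-1})$ corresponds to specializing $\mathbf{X}_i=\mathbf{a}'A^{-1}$ in $f$, up to an invertible change of the remaining variables. The only cosmetic difference is that you conclude via the injectivity of the polynomial-ring automorphism $\varphi$, whereas the paper defers to the argument of Lemma~\ref{lemma:visplane} (which uses injectivity of the evaluation map on $\mathcal{F}(\ell-1,m-1)$); your version is arguably the more direct of the two.
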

\begin{proof}
Let $A \in \GL_{\ell'}(\F)$. Consider $h:= \phi_{{\mathbf{0}},A}(f)$., i.e., $h\in {\mathcal F}(\ell,m)$ given by $h(X) = f(XA^{-1})$.  
Observe that if, as before, $\mathbf{X}_1, \dots , \mathbf{X}_{\ell}$ denote the row vectors of $X$, then $\mathbf{X}_1A^{-1}, \dots , \mathbf{X}_{\ell}A^{-1}$ are the row-vectors of $XA^{-1}$. Thus the specialization $\mathbf{X}_i = \mathbf{a}A$ in $h$  corresponds to the specialization $\mathbf{X}_i = \mathbf{a}$ in $f$. The rest of the proof is similar to that of Lemma \ref{lemma:visplane}.
\end{proof}
 
Using the row-vanishing locus, one can obtain a useful estimate for the Hamming weight of a codeword from $\C$.

\begin{proposition}\label{prop:rowspec}
Let $f \in {\mathcal F}(\ell,m)$ and let $i$ be an integer between $1$ and $\ell$. Suppose 
$t=\#V_f^{(i)}$. Then 
\begin{equation}
\label{eq:testimate}
\w(\Ev(f)) \ge \frac{q^{\ell'}-t}{q^{\ell'}-q^{\ell'-\ell}}d(\ell,m).
\end{equation}
\end{proposition}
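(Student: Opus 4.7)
The plan is to invoke the partition-of-weight identity from Lemma \ref{lem:spec} along the $i$-th row and then separate the contribution of the row-vanishing locus from the rest. Concretely, I would write
$$\w(\Ev(f))=\sum_{\mathbf{a}\in\F^{\ell'}}\w\!\left(\Ev(s_{\mathbf{a}}^{(i)}(f))\right),$$
and observe that by the very definition of $V_f^{(i)}$ the $t$ terms indexed by $\mathbf{a}\in V_f^{(i)}$ vanish, while the remaining $q^{\ell'}-t$ terms arise from nonzero elements $s_{\mathbf{a}}^{(i)}(f)\in \mathcal{F}(\ell-1,m-1)$. Assuming $\ell\ge 2$, each of these terms is at least $d(\ell-1,m-1)$, so that
$$\w(\Ev(f))\ge(q^{\ell'}-t)\,d(\ell-1,m-1).$$

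Next I would reduce the proposition to a numerical identity between the relevant minimum distances. Using the closed form $d(\ell,m)=q^{\ell\ell'-\ell^2}\prod_{i=0}^{\ell-1}(q^{\ell}-q^i)$ from Theorem \ref{thh:mindist}, the key claim is
$$d(\ell,m)=(q^{\ell'}-q^{\ell'-\ell})\,d(\ell-1,m-1).$$
This is verified by the standard reindexing $\prod_{i=0}^{\ell-1}(q^\ell-q^i)=q^{\ell-1}(q^\ell-1)\prod_{i=0}^{\ell-2}(q^{\ell-1}-q^i)$, after which comparing the powers of $q$ on either side reduces to the identity $\ell'-\ell+(\ell-1)\ell'-(\ell-1)^2=\ell\ell'-\ell^2+\ell-1$, which is immediate. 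Combining this with the lower bound above yields exactly
$$\w(\Ev(f))\ge\frac{q^{\ell'}-t}{q^{\ell'}-q^{\ell'-\ell}}\,d(\ell,m).$$

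It remains to deal with the boundary case $\ell=1$, where $s_{\mathbf{a}}^{(1)}(f)$ is a scalar rather than an element of a smaller affine Grassmann code, so that $d(\ell-1,m-1)$ is not directly available from earlier in the paper. Here I would argue directly: for each $\mathbf{a}\notin V_f^{(1)}$ the scalar $s_{\mathbf{a}}^{(1)}(f)\in\F^{*}$ contributes a single nonzero coordinate to $\Ev(f)$, so $\w(\Ev(f))=q^{\ell'}-t$, while Remark \ref{rem:dethyper} (or Theorem \ref{thh:mindist} with $\ell=1$) gives $d(1,m)=q^{\ell'-1}(q-1)=q^{\ell'}-q^{\ell'-1}$; the inequality of the proposition then reduces to the identity $q^{\ell'}-t\ge q^{\ell'}-t$.

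I do not foresee a real obstacle: the only thing that requires any verification is the clean identity $d(\ell,m)=(q^{\ell'}-q^{\ell'-\ell})\,d(\ell-1,m-1)$, which is an exercise in manipulating the product formula from Theorem \ref{thh:mindist}. The rest is a direct application of Lemma \ref{lem:spec} together with the definition of the row-vanishing locus.
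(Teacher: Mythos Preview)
Your proposal is correct and follows essentially the same route as the paper: apply Lemma~\ref{lem:spec}, drop the $t$ vanishing terms, bound each remaining term by $d(\ell-1,m-1)$, and then verify the identity $d(\ell,m)=(q^{\ell'}-q^{\ell'-\ell})\,d(\ell-1,m-1)$ from Theorem~\ref{thh:mindist}. Your explicit treatment of the boundary case $\ell=1$ is a nice touch that the paper leaves implicit.
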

\begin{proof}
In view of Lemma \ref{lem:spec} and the definition of $V_f^{(i)}$, we see that 
\begin{equation}\label{eq:Specwitht}
\w(\Ev(f))=\sum_{{\mathbf{a}} \in \F^{\ell'}\setminus V_f^{(i)}}\w(\Ev(s_{\mathbf{a}}^{(i)}(f))) \ge  \big(q^{\ell'}-t \big)d(\ell-1, m-1).
\end{equation}
On the other hand, by Theorem \ref{thh:mindist},
$$
d(\ell,m) = q^{\ell(\ell'-\ell)}\prod_{i=0}^{\ell-1}(q^\ell-q^i) \quad \mbox{and} \quad 
d(\ell-1,m-1) = q^{(\ell-1)(\ell'-\ell)}\prod_{j=0}^{\ell-2}(q^{\ell-1}-q^j).
$$
Hence, by a direct computation, 
$d(\ell,m)/d(\ell-1, m-1)=q^{\ell'}-q^{\ell'-\ell}$. Combining this with \eqref{eq:Specwitht}, we obtain 
the desired result. 
\end{proof}

Proposition \ref{prop:rowspec} has the following important corollary for minimum weight codewords, which will be the key to our characterization 
of minimum weight codewords.

\begin{corollary}\label{cor:vanishing}
Let $f \in {\mathcal F}(\ell,m)$. If $\Ev(f)$ is a minimum weight codeword of $\C$, then $\#V_f^{(i)} \ge q^{\ell'-\ell}$ 
for all $i \in \{1, \dots , \ell\}$. 
\end{corollary}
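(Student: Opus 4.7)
The plan is to derive the corollary as an immediate consequence of Proposition \ref{prop:rowspec}. Since $\Ev(f)$ is assumed to be a minimum weight codeword, in particular it is nonzero, and $\w(\Ev(f))$ equals the minimum distance $d(\ell,m)$, which is a positive integer by Theorem \ref{thh:mindist}.

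First I would apply Proposition \ref{prop:rowspec} to $f$ with the index $i$ fixed and $t = \#V_f^{(i)}$, obtaining the inequality
\[
d(\ell,m) \;=\; \w(\Ev(f)) \;\ge\; \frac{q^{\ell'}-t}{q^{\ell'}-q^{\ell'-\ell}}\,d(\ell,m).
\]
Next, since $d(\ell,m)>0$, I would divide both sides by $d(\ell,m)$, which yields
\[
1 \;\ge\; \frac{q^{\ell'}-t}{q^{\ell'}-q^{\ell'-\ell}}.
\]
Note that the denominator $q^{\ell'} - q^{\ell'-\ell}$ is strictly positive because $\ell \ge 1$, so multiplying through by it preserves the direction of the inequality and gives $q^{\ell'}-q^{\ell'-\ell} \ge q^{\ell'}-t$, i.e., $t \ge q^{\ell'-\ell}$. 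Since $i$ was arbitrary in $\{1, \dots, \ell\}$, this proves the corollary.

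There is no real obstacle here — the work has already been done in Proposition \ref{prop:rowspec} and in the exact formula for $d(\ell,m)$ from Theorem \ref{thh:mindist}. The only minor point worth stating explicitly in the write-up is that a minimum weight codeword is by convention nonzero, so the application of the proposition is legitimate, and that the denominator appearing in the estimate is positive, so one may clear it safely.
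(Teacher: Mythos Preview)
Your proof is correct and follows essentially the same approach as the paper: both deduce the corollary directly from Proposition~\ref{prop:rowspec}, the paper phrasing it as a contrapositive (if $\#V_f^{(i)} < q^{\ell'-\ell}$ then $\w(\Ev(f)) > d(\ell,m)$), while you rearrange the inequality directly. The extra care you take in noting that $d(\ell,m)>0$ and that the denominator is positive is appropriate, though note that Proposition~\ref{prop:rowspec} itself does not require $f\ne 0$, so the nonvanishing of $f$ is not strictly needed to invoke it.
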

\begin{proof}
If $\#V_f^{(i)} < q^{\ell'-\ell}$ for some $i\in \{1, \dots , \ell\}$, 
then by 
Proposition \ref{prop:rowspec}, we obtain 
$\w(\Ev(f))>d(\ell,m)$.
\end{proof}

We are now ready to formulate and prove a characterization of minimum weight codewords of $\C$. 
Recall that if $Y=\left(Y_{ij}\right)$ is any $\ell\times \ell'$ matrix and, as before, $\ell \le \ell'$, then the 
\emph{leading maximal minor} of $Y$ is the minor formed by the first $\ell$ columns of $Y$, namely, 
$\det\big(\left(Y_{ij}\right)_{1\le i,j \le \ell}\big)$.

\begin{theorem}\label{thm:char}
Let $f \in {\mathcal F}(\ell,m)$. Then $f$ is a minimum weight codeword of $\C$  if and only if $f$ is in the ${\mathfrak G}(\ell,m)$-orbit of the leading maximal minor of $X$. 
In other words, $\w(\Ev(f))=d(\ell,m)$ if and only if $f$ is the leading maximal minor of $Y$, where $Y=XA^{-1}+\mathbf{u}$ for some $A \in \GL_{\ell'}(\F)$ and $\mathbf{u} \in M_{\ell \times \ell'}(\F)$. 
\end{theorem}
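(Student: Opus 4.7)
The plan is to prove the two directions separately. The easy direction goes as follows: if $f$ is the leading maximal minor of $XA^{-1}+\mathbf{u}$, then by the proof of Lemma \ref{lem:sigmauA} the codeword $\Ev(f)$ is the image under the coordinate permutation $\sigma_{\mathbf{u},A}$ of the codeword $\Ev$ applied to the leading maximal minor of $X$; hence the two codewords share the same Hamming weight, which is $d(\ell,m)$ by Lemma \ref{lem:upper} and Theorem \ref{thh:mindist}. The substance of the theorem lies in the converse, so assume from now on that $\w(\Ev(f))=d(\ell,m)$.

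I would first translate so that the origin lies in every row-vanishing locus. By Corollary \ref{cor:vanishing} each $V_f^{(i)}$ is nonempty, so one can choose $\mathbf{u}\in M_{\ell\times\ell'}(\F)$ with $\mathbf{u}_i\in V_f^{(i)}$ for every $i$; Lemma \ref{lemma:visplane} then yields $\mathbf{0}\in V_{\tilde f}^{(i)}$ for every $i$, where $\tilde f(X):=f(X+\mathbf{u})$. Since Lemma \ref{lem:sigmauA} shows that passing to $\tilde f$ preserves Hamming weight, $\tilde f$ is again a minimum weight codeword. The condition $s_{\mathbf{0}}^{(i)}(\tilde f)=0$ says that specialising the $i$-th row of $X$ to zero annihilates $\tilde f$; this specialisation kills every minor in $\supp(\tilde f)$ that involves the $i$-th row and leaves the others unchanged, so by the linear independence of minors (Lemma \ref{lem:dimFml}) the coefficient in $\tilde f$ of every minor not involving the $i$-th row must vanish. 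Ranging over all $i$, this forces $\tilde f$ to be an $\F$-linear combination of minors involving every row, i.e., of the $\binom{\ell'}{\ell}$ maximal $\ell\times\ell$ minors of $X$.

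Next I would exploit antisymmetry. Each $\ell\times\ell$ minor is alternating multilinear in the rows $\mathbf{X}_1,\dots,\mathbf{X}_\ell$, so $\tilde f$ itself defines an alternating multilinear function on $(\F^{\ell'})^\ell$. Swapping rows $i$ and $j$ shows that $s_{\mathbf{a}}^{(i)}(\tilde f)=0$ if and only if $s_{\mathbf{a}}^{(j)}(\tilde f)=0$, so all the row-vanishing loci $V_{\tilde f}^{(i)}$ coincide with a common $\F$-subspace $W\subseteq\F^{\ell'}$. By alternation, $\tilde f$ vanishes on every $\ell$-tuple of rows having at least one entry in $W$, and therefore $\tilde f$ descends to an alternating multilinear $\ell$-form on the quotient $\F^{\ell'}/W$. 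Since no nonzero alternating $\ell$-form can exist on a space of dimension less than $\ell$, the nonvanishing of $\tilde f$ forces $\dim(\F^{\ell'}/W)\ge\ell$, i.e., $\dim W\le\ell'-\ell$; combining with Corollary \ref{cor:vanishing} yields $\dim W=\ell'-\ell$ exactly. In this case the space of alternating $\ell$-forms on $\F^{\ell'}/W\cong\F^{\ell}$ is one-dimensional, spanned by the determinant, so choosing a surjection $\pi:\F^{\ell'}\to\F^{\ell}$ with kernel $W$ and writing $\pi(\mathbf{x})=\mathbf{x}Q$ for some rank-$\ell$ matrix $Q\in M_{\ell'\times\ell}(\F)$, I obtain $\tilde f=c\cdot\det(XQ)$ for some nonzero $c\in\F$.

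To conclude, I would absorb $c$ by rescaling the first column: set $\tilde Q:=Q\cdot\mathrm{diag}(c,1,\dots,1)$, which still has rank $\ell$, and extend its columns to a basis of $\F^{\ell'}$, obtaining an invertible matrix $A^{-1}\in\GL_{\ell'}(\F)$ whose first $\ell$ columns are $\tilde Q$. Then the leading maximal minor of $XA^{-1}$ equals $\det(X\tilde Q)=c\det(XQ)=\tilde f$; setting $\mathbf{u}':=-\mathbf{u}A^{-1}$, the relation $f(X)=\tilde f(X-\mathbf{u})$ becomes $f=$ the leading maximal minor of $XA^{-1}+\mathbf{u}'$, which places $f$ in the ${\mathfrak G}(\ell,m)$-orbit of the leading maximal minor of $X$. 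The main obstacle in this plan is the descent step of the third paragraph: combining alternation with the lower bound on $\dim W$ from minimality to show that the descended form is a nonzero scalar multiple of a single determinant is the crucial structural input, while the remaining steps are bookkeeping with the automorphisms in ${\mathfrak G}(\ell,m)$ and with the lemmas from the preceding sections.
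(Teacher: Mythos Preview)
Your proof is correct and takes a genuinely different route from the paper's after the common opening moves.

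Both arguments begin identically: the easy direction via Lemma~\ref{lem:sigmauA} and Lemma~\ref{lem:upper}, then for the converse the translation by $\mathbf{u}$ (using Corollary~\ref{cor:vanishing} and Lemma~\ref{lemma:visplane}) to arrange $\mathbf{0}\in V_{\tilde f}^{(i)}$ for all $i$, followed by the observation (via Lemma~\ref{lem:dimFml}) that $\tilde f$ is then a combination of maximal minors only. From this point the two diverge. The paper works in coordinates with a single row-vanishing locus: it picks $\ell'-\ell$ linearly independent vectors in $V_{\tilde f}^{(1)}$, applies a column transformation $A$ sending them to the last $\ell'-\ell$ standard basis vectors, and then uses the specializations $s^{(1)}_{\mathbf{e}_j}$ for $j>\ell$ together with Lemma~\ref{lem:dimFml} to kill every maximal minor except the leading one. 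Your approach instead exploits that a combination of maximal minors is alternating multilinear in the rows: this forces all the $V_{\tilde f}^{(i)}$ to coincide with a common subspace $W$, lets $\tilde f$ descend to an alternating $\ell$-form on $\F^{\ell'}/W$, and then pins down $\dim W=\ell'-\ell$ by squeezing the lower bound from Corollary~\ref{cor:vanishing} against the upper bound coming from the nonvanishing of a top alternating form. The one-dimensionality of $\wedge^\ell(\F^\ell)^*$ then gives $\tilde f=c\det(XQ)$ directly.

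What each buys: the paper's argument is entirely elementary and self-contained, never invoking exterior algebra, and proceeds by explicit specialization and comparison of coefficients. Your argument is shorter and more conceptual---it identifies the structural reason (uniqueness of the top alternating form on an $\ell$-dimensional quotient) why $\tilde f$ must be a single determinant, and it explains, as a byproduct, why the bound $\#V_f^{(i)}\ge q^{\ell'-\ell}$ in Corollary~\ref{cor:vanishing} is sharp for minimum weight codewords. One small point worth making explicit in your write-up: the equality $\tilde f=c\det(XQ)$ is first obtained as an equality of \emph{functions} on $(\F^{\ell'})^\ell$; to upgrade it to an equality of polynomials you should note that both sides lie in ${\mathcal F}(\ell,m)$ (for $\det(XQ)$ this is Cauchy--Binet, Lemma~\ref{lem:CB}) and then invoke the injectivity of $\Ev$ on ${\mathcal F}(\ell,m)$.
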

\begin{proof}
Let ${\mathcal L} := \det\big(\left(X_{ij}\right)_{1\le i,j \le \ell}\big)$ denote the leading maximal minor of $X$. Suppose $f$ is in the 
${\mathfrak G}(\ell,m)$-orbit of ${\mathcal L}$. Then, as noted in Section \ref{sec:automorph}, the codewords $\Ev(f)$ and $\Ev({\mathcal L})$ differ from each other by a permutation of the coordinates. Hence 
$\w(\Ev(f))=\w(\Ev({\mathcal L})) =d(\ell,m)$, thanks to  Lemma \ref{lem:upper}. 

To prove the converse, suppose $\w(\Ev(f))=d(\ell,m)$. Then $f$ must be a nonzero polynomial since $d(\ell,m)<q^{\delta}$. Further, since $\ell'-\ell\ge 0$, Corollary \ref{cor:vanishing} implies that $V_f^{(i)}$ is nonempty for each $i\in \{1, \dots , \ell\}$. Choose ${\mathbf{u}}_i \in V_f^{(i)}$ for $1\le i \le \ell$. Let ${\mathbf{u}} \in M_{\ell\times\ell'}(\F)$ be 
the $\ell\times\ell'$ matrix whose $i$-th row vector is ${\mathbf{u}}_i$ for $1\le i \le \ell$, and let $g:= \phi_{{\mathbf{u}},I}(f)$. 
Then $g$ is in the ${\mathfrak G}(\ell,m)$-orbit of $f$ and by Lemma \ref{lemma:visplane}, 
$$
V_f^{(i)}={\mathbf{u}}_i+V_{g}^{(i)} \quad \mbox{and}\quad \mathbf{0}\in V_{g}^{(i)} \quad \mbox{for each } i\in \{1, \dots , \ell\}.
$$
Thus, $s_{\bf 0}^{(i)}(g)=0$ for each $i\in \{1, \dots , \ell\}$. Now observe that for any ${\mathcal M} \in \Delta(\ell,m)$  
and any $i\in \{1, \dots , \ell\}$, we have  $s_{\bf 0}^{(i)}({\mathcal M})=0$ if ${\mathcal M}$ involves the $i$-th row of $X$ and $s_{\bf 0}^{(i)}({\mathcal M})={\mathcal M}$ otherwise. Consequently, if $g=\sum_{{\mathcal M} \in \Delta(\ell,m)}a_{{\mathcal M}}{\mathcal M}$, where
$a_{{\mathcal M}}\in \F$ for ${\mathcal M}\in \Delta(\ell,m)$, then by Lemma \ref{lem:dimFml}, we see that $a_{{\mathcal M}} =0$ 
for all ${\mathcal M} \in \cup_{i=0}^{\ell-1}\Delta_i(\ell,m)$. 
This proves that $g$ is a $\F$-linear combination of $\ell \times \ell$ minors of $X$. In particular, if $\ell'=\ell$, then $\mathcal L$ being 
the only $\ell \times \ell$ minors of $X$, we obtain $g= c{\mathcal L}$ for some
$c\in \F$ with $c\ne 0$. Since ${\mathcal L} = \phi_{\mathbf{0}, D}(c{\mathcal L})$, where $D$ denotes the $\ell'\times \ell'$ diagonal matrix ${\rm diag}\left(c, 1, \dots , 1\right)$ in $\GL_{\ell'}(\F)$, we see that $f$ is in the ${\mathfrak G}(\ell,m)$-orbit of ${\mathcal L}$ when $\ell'=\ell$. 

Now suppose $\ell< \ell'$.  Consider the first row-vanishing space  $V_g^{(1)}$. In view of Corollary \ref{cor:rowvanishingaffine} 
and the fact that $\mathbf{0}\in V_{g}^{(1)}$, we see that $V_g^{(1)}$ is a linear space over $\F$. Moreover, Corollary \ref{cor:vanishing} 
implies that the dimension of $V_g^{(1)}$ is at least $\ell'-\ell$. Hence we can choose linearly independent vectors ${\mathbf{b}}_1,\dots,{\mathbf{b}}_{\ell'-\ell} \in V_g^{(1)}$. Let ${\mathbf{b}}$ be
the $(\ell'-\ell)\times\ell'$ matrix whose $i$-th row vector is ${\mathbf{b}}_i$ for ${1\le i \le \ell'-\ell}$. Since ${\mathbf{b}}$ has full rank, there exists an invertible matrix $A \in \GL_{\ell'}(\F)$ such that
$$
{\mathbf{b}}A= \left(\mathbf{0}_{(\ell'-\ell)\times \ell} \, | \, I_{\ell'-\ell}\right) = 
\left(
\begin{array}{ccccccc}
0      & \cdots & 0      & 1      & 0 & \dots  & 0 \\
0      & \cdots & 0      & 0      & 1 & \dots & 0 \\ 
\vdots &        & \vdots & \vdots &   & \ddots & \vdots\\
0      & \cdots & 0      & 0      & 0 & \dots  & 1 \\
\end{array}
\right).
$$
Indeed, the matrix on the right is essentially the reduced column-echelon form of ${\mathbf{b}}$. 
We now consider the function $h=h(X)=\phi_{{\bf 0},A}(g(X))=g(XA^{-1})$. Clearly, $h$ is in the ${\mathfrak G}(\ell,m)$-orbit of $g$ and hence of $f$; in particular, $\w(\Ev(h)) = d(\ell,m)$ and $h$ is a nonzero polynomial. 
By the multilinearity of the determinant, it can, just as $g$, be written as a $\F$-linear combination of $\ell \times \ell$ minors of $X$. 
For $1\le j \le \ell'$, let $\mathbf{e}_j$ denote the vector in $\F^{\ell'}$ with $1$ in the $j$-th position and $0$ elsewhere. Observe that 
if ${\mathcal M} \in \Delta_\ell(\ell,m)$ is the minor formed by the columns of $X$ indexed by $j_1, \dots , j_{\ell}$, where $1\le j_1<\cdots<j_{\ell}\le \ell'$, then 
$s^{(1)}_{\mathbf{e}_j} ({\mathcal M})= 0$ if $j\not\in\{j_1,\dots,j_\ell\}$, whereas $s^{(1)}_{\mathbf{e}_j}({\mathcal M})$ is a nonzero polynomial (and, in fact, $\pm {\mathcal M}_1$, where ${\mathcal M}_1$ is a $(\ell-1)\times(\ell-1)$ minor of $X$) if $j\in\{j_1,\dots,j_\ell\}$. 
Now by the choice of $A$ and by Lemma 25, we have that $\mathbf{e}_j \in V_h^{(1)}$ for all $j$ such that $\ell < j \le \ell'$. 
Consequently, if $h=\sum_{{\mathcal M} \in \Delta_{\ell}(\ell,m)} a_{{\mathcal M}}{\mathcal M}$, where $a_{{\mathcal M}}\in \F$ for ${\mathcal M}\in \Delta_{\ell}(\ell,m)$, then by Lemma \ref{lem:dimFml}, we see that 
$a_{{\mathcal M}} =0$ for all those ${\mathcal M}$ in $\Delta_{\ell}(\ell,m)$ that involve the $j$-th column of $X$ for some $j>\ell$. But the only $\ell\times \ell$ minor of $X$ that does not involve the $j$-th column of $X$ for some $j>\ell$ is $\mathcal L$. Hence  $h= c{\mathcal L}$ for some $c\in \F$ with $c\ne 0$. It follows that $f$ is in the ${\mathfrak G}(\ell,m)$-orbit of ${\mathcal L}$. 
\end{proof}

In case $\ell'=\ell$, the above theorem simplifies to the statement in Remark \ref{rem:mis2l}.

\section{Enumeration of minimum weight codewords}
\label{sec:enum}

In this section, we let $d=d(\ell,m)$ denote the minimum distance of $\C$ and $A_d$ the number of minimum weight codewords of $\C$. Having
characterized the codewords of weight $d$ in the previous section, we now proceed to compute $A_d$. Equivalently, we determine the number of polynomials $f \in {\mathcal F}(\ell,m)$ giving rise to minimum weight codewords. We have seen in Section \ref{sec:automorph} that the
finite group ${\mathfrak G}(\ell,m)$ acts naturally on ${\mathcal F}(\ell,m)$.  With this in view, we can use standard group theory together 
with Theorem \ref{thm:char} to obtain the following.

\begin{lemma}\label{lem:groupaction}
Let ${\mathcal L}=\det\big((X_{ij})_{1\le i,j \le \ell}\big)$ be the leading maximal minor of $X$. Then 
$$
A_{d}=\frac{\#{\mathfrak G}(\ell,m)}{\#\stab({\mathcal L})},
$$ 
where $\stab({\mathcal L})$ denotes the stabilizer of the minor ${\mathcal L}$.
\end{lemma}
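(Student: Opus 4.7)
The plan is a direct application of the orbit--stabilizer theorem, combined with Theorem~\ref{thm:char} and the injectivity of the evaluation map on ${\mathcal F}(\ell,m)$.

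First I would verify that the prescription $\phi_{{\mathbf{u}},A} \cdot f := f(XA^{-1} + {\mathbf{u}})$ indeed defines an action of ${\mathfrak G}(\ell,m)$ on ${\mathcal F}(\ell,m)$. That the transformed polynomial still lies in ${\mathcal F}(\ell,m)$ was already observed in the proof of Lemma~\ref{lem:sigmauA}, and the compatibility with the group law of ${\mathfrak G}(\ell,m)$ follows immediately from \eqref{eq:phiAB}. Next, I would identify minimum weight codewords with the polynomials producing them: since $\Ev$ is injective on ${\mathcal F}(\ell,m)$, the map $f \mapsto \Ev(f)$ sets up a bijection between $\{f \in {\mathcal F}(\ell,m) : \w(\Ev(f)) = d\}$ and the set of minimum weight codewords of $\C$. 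So $A_d$ equals the number of polynomials $f \in {\mathcal F}(\ell,m)$ with $\w(\Ev(f))=d$.

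Now Theorem~\ref{thm:char} describes the set of such $f$ exactly as the ${\mathfrak G}(\ell,m)$-orbit of the leading maximal minor ${\mathcal L}$. By the orbit--stabilizer theorem applied to the finite group ${\mathfrak G}(\ell,m)$ acting on ${\mathcal F}(\ell,m)$, the cardinality of this orbit is $\#{\mathfrak G}(\ell,m)/\#\stab({\mathcal L})$, which yields the claimed formula.

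No step here poses a genuine obstacle; the only subtlety worth a line of comment is that the bijection between polynomials in ${\mathcal F}(\ell,m)$ and codewords of $\C$ is really being used, so that counting orbit elements in ${\mathcal F}(\ell,m)$ is the same as counting minimum weight codewords. The actual computational work, namely determining $\#\stab({\mathcal L})$ explicitly, is deferred to subsequent results and is not needed for the lemma itself.
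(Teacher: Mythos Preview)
Your proposal is correct and follows essentially the same approach as the paper: invoke Theorem~\ref{thm:char} to identify the minimum weight codewords with the ${\mathfrak G}(\ell,m)$-orbit of ${\mathcal L}$, and then apply the orbit--stabilizer theorem. The paper's proof is terser, but the extra details you spell out (well-definedness of the action and the injectivity of $\Ev$ on ${\mathcal F}(\ell,m)$) are exactly the points the paper leaves implicit.
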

\begin{proof}
By Theorem \ref{thm:char}, the cardinality of the ${\mathfrak G}(\ell,m)$-orbit of ${\mathcal L}$ is equal to $A_d$. On the other hand, for any finite group acting on a finite set, the cardinality of the orbit of an element is equal to the index of its stabilizer. 
\end{proof}

Thanks to Lemma \ref{lem:groupaction}, the computation of $A_d$ reduces to the problem of finding 
the cardinality of the stabilizer of ${\mathcal L}:= \det\big((X_{ij})_{1\le i,j \le \ell}\big)$. 
To this end, let us begin by observing that if $f \in {\mathcal F}(\ell,m)$ is in the ${\mathfrak G}(\ell,m)$-orbit of ${\mathcal L}$, 
i.e., if $f= \phi_{\mathbf{u},A}({\mathcal L})$ for some $A\in \GL_{\ell'}(\F)$ and $\mathbf{u}\in M_{\ell\times \ell'}(\F)$,  then 
\begin{equation}
\label{eq:orbitofL}
f = \det (XM+\mathbf{m}) \quad \mbox{for some $M \in M_{\ell'\times\ell}(\F)$ of rank $\ell$ and ${\bf m} \in M_{\ell\times\ell}(\F)$.} 
\end{equation}
Indeed, it suffices to take $M$ to be the ${\ell'\times\ell}$ matrix formed by the first $\ell$ columns of $A^{-1}$ and ${\bf m}$ to be the ${\ell\times\ell}$ matrix formed by the first $\ell$ columns of $\mathbf{u}$, and observe that $\rank(M)= \ell$ since $A$ is nonsingular and that the leading maximal minor of the $\ell\times \ell'$ matrix $XA^{-1}+\mathbf{u}$ is $\det (XM+\mathbf{m})$. We shall now analyze when a polynomial $f$ given by \eqref{eq:orbitofL} is in the stabilizer of $\mathcal L$. As usual, we denote by $\SL_\ell(\F)$ the special linear 
group of $\ell \times \ell$ matrices over $\F$, viz., $\SL_\ell(\F):=\{A\in \GL_\ell(\F) : \det A = 1\}$. 
 
\begin{lemma}\label{lem:stab}
Let ${\mathcal L}=\det\big((X_{ij})_{1\le i,j \le \ell}\big)$ be the leading maximal minor of $X$. Also  
let $M \in M_{\ell'\times\ell}(\F)$ be of rank $\ell$ and ${\bf m} \in M_{\ell\times\ell}(\F)$. 
Then ${\mathcal L}=\det(XM+{\bf m})$ if and only if ${\bf m}={\bf 0}$ and 
there exists $E \in \SL_\ell(\F)$ such that the first $\ell$ rows of $ME$ form the $\ell\times \ell$ identity matrix, while the last $\ell'-\ell$ 
rows are zero. In this case, the matrix $E$ in $\SL_\ell(\F)$ is uniquely determined by $M$.
\end{lemma}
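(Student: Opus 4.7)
The ``if'' direction is immediate: if $\mathbf{m}=\mathbf{0}$ and $ME=\begin{pmatrix}I_\ell\\ \mathbf{0}\end{pmatrix}$ for some $E\in\SL_\ell(\F)$, write $X_L:=(X_{ij})_{1\le i,j\le\ell}$ for the leading $\ell\times\ell$ submatrix of $X$; then $XM=X(ME)E^{-1}=X_LE^{-1}$, and $\det(XM+\mathbf{m})=\det(X_L)\det(E^{-1})=\det(X_L)=\mathcal{L}$. For the converse, my plan is to exploit the grading
\[
\mathcal{F}(\ell,m)=\bigoplus_{i=0}^{\ell}\mathrm{span}_{\F}\bigl(\Delta_i(\ell,m)\bigr)
\]
guaranteed by the linear independence in Lemma \ref{lem:dimFml}. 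Expanding $\det(XM+\mathbf{m})$ by multilinearity in the columns gives $\sum_{T\subseteq\{1,\dots,\ell\}}\det(N_T)$, where $N_T$ is built from $XM$ by replacing every column outside $T$ with the corresponding column of $\mathbf{m}$. Laplace expansion along the $\mathbf{m}$-columns followed by Cauchy--Binet (Lemma \ref{lem:CB}) on the remaining $|T|\times|T|$ block of $XM$ shows that $\det(N_T)\in\mathrm{span}_{\F}\bigl(\Delta_{|T|}(\ell,m)\bigr)$, so matching the $\Delta_\ell$-components of the identity $\det(XM+\mathbf{m})=\mathcal{L}$ yields $\det(XM)=\mathcal{L}$.

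Next, expand $\det(XM)=\mathcal{L}$ via Cauchy--Binet as $\sum_I\det(M_I)\det(X^I)=\det(X^{I_0})$, where $I$ runs over $\ell$-subsets of $\{1,\dots,\ell'\}$ and $I_0:=\{1,\dots,\ell\}$. Linear independence of $\Delta_\ell(\ell,m)$ forces $\det(M_{I_0})=1$ and $\det(M_I)=0$ for every $I\neq I_0$. The first $\ell$ rows $r_1,\dots,r_\ell$ of $M$ therefore form a basis of $\F^\ell$; for $k>\ell$, writing the $k$-th row of $M$ as $\sum_{j=1}^{\ell}c_j r_j$ and plugging into the vanishing condition for $I=I_0\setminus\{i\}\cup\{k\}$ yields $0=\det(M_I)=\pm c_i\det(M_{I_0})=\pm c_i$, so $c_i=0$ for every $i$, and hence every row of $M$ beyond the $\ell$-th vanishes. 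Setting $E:=M_{I_0}^{-1}$ then produces a matrix in $\SL_\ell(\F)$ with $ME=\begin{pmatrix}I_\ell\\ \mathbf{0}\end{pmatrix}$, and the relation $M_{I_0}E=I_\ell$ extracted from the top $\ell$ rows also forces the uniqueness of $E$.

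Finally, since $XME=X_L$ and $\det E=1$, the identity $\det(XM+\mathbf{m})=\mathcal{L}$ is equivalent to
\[
\det(X_L+\mathbf{m}E)=\det\bigl((XM+\mathbf{m})E\bigr)=\mathcal{L}=\det(X_L).
\]
Applying Lemma \ref{lem:xplusa} with $Y=X_L$ and $B=\mathbf{m}E$ rewrites this as
\[
\sum_{1\le i,j\le\ell}(-1)^{i+j}(\mathbf{m}E)_{ij}\det(X_L^{ij})+h=0,
\]
where $\supp(h)\subseteq\bigcup_{i=0}^{\ell-2}\Delta_i(\ell,m)$. The minors $\det(X_L^{ij})$ are distinct elements of $\Delta_{\ell-1}(\ell,m)$, so Lemma \ref{lem:dimFml} forces every entry of $\mathbf{m}E$ to vanish, and $\mathbf{m}=\mathbf{0}$ follows since $E$ is invertible. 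The main technical obstacle is the bookkeeping in the first paragraph: one must verify that the mixed multilinear/Laplace/Cauchy--Binet expansion of $\det(XM+\mathbf{m})$ respects the grading of $\mathcal{F}(\ell,m)$ by minor order, so that the $\Delta_\ell$- and $\Delta_{\ell-1}$-components of the identity with $\mathcal{L}$ can be isolated and analyzed in turn; once this is granted, the two structural assertions on $M$ and $\mathbf{m}$ drop out of those top two graded parts respectively.
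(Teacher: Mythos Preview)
Your proof is correct and follows essentially the same strategy as the paper: isolate the top-degree component of $\det(XM+\mathbf{m})$ in $\mathcal{F}(\ell,m)$, use Cauchy--Binet plus Lemma~\ref{lem:dimFml} to pin down the structure of $M$, and then invoke Lemma~\ref{lem:xplusa} on $\det(X_L+B)=\det(X_L)$ to force the translation part to vanish. The only notable difference is that the paper first writes $\mathbf{m}=NM$ for some $N\in M_{\ell\times\ell'}(\F)$ (using that $M$ has full rank) so as to apply Cauchy--Binet to $\det((X+N)M)$ directly, whereas you expand $\det(XM+\mathbf{m})$ by column multilinearity and handle each piece with Laplace plus Cauchy--Binet; your route avoids the auxiliary $N$ and is arguably a bit cleaner, but the two arguments are otherwise the same.
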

\begin{proof}
We start by showing the uniqueness of the matrix $E$. Suppose 
$$
ME_1=\left(\begin{array}{l}{\mathbf{I}_{\ell}}\\ \mathbf{0} \end{array}\right) =ME_2 \quad \mbox{ for some $E_1, E_2\in \SL_\ell(\F)$,}
$$
where ${\mathbf{I}_{\ell}}$ denotes  the $\ell\times \ell$ identity matrix and $\mathbf{0}$ the $(\ell'-\ell)\times \ell$ zero matrix.
Then $M(E_2-E_1)=0$. Since $M$ has full rank, this can only happen if $E_1=E_2$.

To prove the equivalence, first suppose there exists $E \in \SL_\ell(\F)$ such that  
\begin{equation}\label{eq:mandME}
ME=\left(\begin{array}{l}{\mathbf{I}_{\ell}}\\ \mathbf{0} \end{array}\right), 
\end{equation}
and also suppose $\mathbf{m} = \mathbf{0}$. 
Then 
$$
\det(XM+{\bf m})= \det(XM)=\det(XME)= \det \left( X \left(\begin{array}{l}{\mathbf{I}_{\ell}}\\ \mathbf{0} \end{array}\right) \right) = {\mathcal L}.
$$

Conversely, suppose ${\mathcal L}=\det(XM+{\bf m})$. Since $M$ has full rank, there exists 
$N\in M_{\ell'\times\ell}(\F)$ such that $NM={\bf m}$. 
Hence ${\mathcal L} =\det(XM+{\bf m})=\det((X+N)M)$. Using Cauchy-Binet formula (Lemma \ref{lem:CB}) and the notation therein, we now find 
\begin{equation}\label{eq:CBdecomp}
{\mathcal L} =\sum_{I}\det((X+N)^I)\det(M_I),
\end{equation}
where the sum is over all subsets $I$ of $\{1,\dots,\ell'\}$ of cardinality $\ell$. For any such~$I$, Lemma \ref{lem:xplusa} implies that 
$\det((X+N)^I)$ is the sum of $\det(X^I)$ and a $\F$-linear combination of minors of $X^I$ of order $<\ell$. Hence, comparing terms of total degree $\ell$ in \eqref{eq:CBdecomp}, 
we obtain 
\begin{equation}\label{eq:CBhom}
{\mathcal L}= \det(X^{I^*}) = \sum_{I}\det(X^I)\det(M_I), \quad \mbox{where} \quad I^*:=\{1, \dots , \ell\}.
\end{equation}
Consequently, in view of Lemma \ref{lem:dimFml}, $\det(M_{I^*})=1$, while $\det(M_I)=0$ for every $I\subseteq \{1,\dots,\ell'\}$ with $\# I = \ell$ and $I\ne I^*$. Define $E:= M_{I^*}^{-1}$. It is clear that $E \in \SL_\ell(\F)$. Moreover, by the choice of $E$, the first $\ell$ rows of $ME$ form the $\ell\times\ell$ identity matrix ${\mathbf{I}_{\ell}}$. We claim that for any $i>\ell$, the $i$-th row $ME_i$ of $ME$ is zero. To see this, write $ME_i=(b_1,\dots,b_\ell)$. Choose any $j \in I^*$ 
and let $I:=\left(I^*\cup\{i\} \right)\setminus\{j\}$. 
Then $I\subseteq \{1,\dots,\ell'\}$ with $\# I = \ell$ and 
$\det(M_I)=0$ since $I\ne I^*$. On the other hand, $\det(M_I)=\det(M_IE)$. Now, since the first $\ell-1$ elements of $I$ are contained in $\{1,\dots,\ell\}$, the first $\ell-1$ rows of the matrix $M_IE$ form the matrix obtained from ${\mathbf{I}_{\ell}}$ 
by deleting its $j$-th row. This implies that $0=\det(M_IE)=\pm (M_IE)_{\ell j} = \pm (ME)_{ij} $. By varying $j$ over $I^*$, 
we obtain $ME_i=(0,\dots,0)$. This proves the claim. 
It remains to show that ${\bf m}={\bf 0}$. 
We have noted earlier that there is $N\in M_{\ell \times \ell'}(\F)$ such that $\mathbf{m} =NM$. Hence 
$$ 
{\mathcal L}=\det(XM+{\bf m})=\det((X+N)M)=\det((X+N)ME)=\det((X_{ij}+N_{ij})_{1\le i,j \le \ell}),
$$
where the penultimate equality follows since $E\in \SL_\ell(\F)$ and the last equality follows since $ME$ satisfies \eqref{eq:mandME}. 
Using Lemma \ref{lem:xplusa} together with Lemma \ref{lem:dimFml}, by comparing the coefficients of $(\ell-1)\times(\ell-1)$ minors, we find  $N_{ij}=0$ for $1\le i,j \le \ell$. But then ${\bf m}E=N(ME)=(N_{ij})_{1\le i,j \le \ell}={\bf 0}$, thanks to \eqref{eq:mandME}. 
Since $E$ is invertible, this implies that ${\bf m}={\bf 0}$.
\end{proof}

We are now ready to compute the cardinality of the stabilizer of the leading maximal minor.

\begin{lemma}\label{prop:stabcount}
Let ${\mathcal L}=\det\big((X_{ij})_{1\le i,j \le \ell}\big)$ be the leading maximal minor of $X$. Then 
$$
\#\stab({\mathcal L})=\frac{q^{\ell(\ell'-\ell)}}{q-1}\prod_{i=\ell}^{\ell'-1}(q^{\ell'}-q^{i})\prod_{j=0}^{\ell-1}(q^\ell-q^i). 
$$
\end{lemma}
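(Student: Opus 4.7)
The plan is to translate the stabilizer condition through Lemma 26 (the one stating when $\det(XM+\mathbf m)=\mathcal L$) and then count the pairs $({\mathbf u},A)$ in $M_{\ell\times\ell'}(\F)\times \GL_{\ell'}(\F)$ that satisfy it. Given $\phi_{{\mathbf u},A}\in {\mathfrak G}(\ell,m)$, writing $A^{-1}=(M\,|\,N)$ with $M\in M_{\ell'\times\ell}(\F)$ and ${\mathbf u}=({\mathbf m}\,|\,{\mathbf n})$ with ${\mathbf m}\in M_{\ell\times\ell}(\F)$, the leading maximal minor of $XA^{-1}+{\mathbf u}$ is exactly $\det(XM+{\mathbf m})$. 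Hence, by Lemma 26, $\phi_{{\mathbf u},A}({\mathcal L})={\mathcal L}$ if and only if ${\mathbf m}={\mathbf 0}$ and there is $E\in\SL_\ell(\F)$ with $ME=\bigl(\begin{smallmatrix}\mathbf I_\ell\\ \mathbf 0\end{smallmatrix}\bigr)$. The right-hand columns $N$ and ${\mathbf n}$ are unconstrained except that $A^{-1}$ must be invertible.

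The count of ${\mathbf u}$'s is easy: ${\mathbf m}={\mathbf 0}$ is forced, and ${\mathbf n}\in M_{\ell\times(\ell'-\ell)}(\F)$ is free, giving $q^{\ell(\ell'-\ell)}$ choices.

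For $A$, I count admissible first blocks $M$ and then count completions $N$. The condition $ME=\bigl(\begin{smallmatrix}\mathbf I_\ell\\ \mathbf 0\end{smallmatrix}\bigr)$ with $E\in\SL_\ell(\F)$ is equivalent, using the uniqueness part of Lemma 26, to saying that the top $\ell\times\ell$ block of $M$ lies in $\SL_\ell(\F)$ and the bottom $(\ell'-\ell)\times\ell$ block vanishes. Thus the number of valid $M$ equals
\[
\#\SL_\ell(\F)=\frac{\#\GL_\ell(\F)}{q-1}=\frac{1}{q-1}\prod_{j=0}^{\ell-1}(q^\ell-q^j).
\]
Given such an $M$, its $\ell$ columns are linearly independent in $\F^{\ell'}$, so completing $(M\,|\,N)$ to an element of $\GL_{\ell'}(\F)$ amounts to extending $\ell$ linearly independent vectors to a basis of $\F^{\ell'}$. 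The standard count gives $\prod_{i=\ell}^{\ell'-1}(q^{\ell'}-q^{i})$ such completions $N$, hence the same number of $A$'s producing the given $M$.

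Multiplying the three independent counts yields
\[
\#\stab({\mathcal L})=q^{\ell(\ell'-\ell)}\cdot\frac{1}{q-1}\prod_{j=0}^{\ell-1}(q^\ell-q^j)\cdot\prod_{i=\ell}^{\ell'-1}(q^{\ell'}-q^{i}),
\]
which is the claimed formula. The only subtle point is the reduction of the column-vanishing condition on $ME$ to "top block in $\SL_\ell$, bottom block zero"; this is the content of Lemma 26 and its uniqueness clause, and once invoked the remaining enumeration is a standard linear-algebra count, so no serious obstacle is expected.
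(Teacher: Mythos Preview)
Your proof is correct and follows essentially the same approach as the paper: both reduce the stabilizer condition via Lemma~\ref{lem:stab} to the requirement that $\mathbf m=\mathbf 0$ and that the first $\ell$ columns of $A^{-1}$ have top block in $\SL_\ell(\F)$ and zero bottom block, and then count the remaining free choices. The only cosmetic difference is in the bookkeeping of the completion step: you count $N$ directly as extensions of $\ell$ independent vectors to a basis of $\F^{\ell'}$, giving $\prod_{i=\ell}^{\ell'-1}(q^{\ell'}-q^i)$, whereas the paper separates the top $\ell$ rows (free) from the bottom $\ell'-\ell$ rows (which must be independent), giving $q^{\ell(\ell'-\ell)}\prod_{i=0}^{\ell'-\ell-1}(q^{\ell'-\ell}-q^i)$; these are of course the same number.
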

\begin{proof} 
Let $A\in\GL_{\ell'}(\F)$ and ${\mathbf{u}}\in M_{\ell\times\ell'}(\F)$. Suppose $\phi_{{\mathbf{u}},A}({\mathcal L})={\mathcal L}$. First we write $A=(M \, | \, R)$ where %
$M\in M_{\ell'\times\ell}(\F)$ and 
$R\in M_{\ell'\times(\ell'-\ell)}(\F)$ are matrices formed, respectively, by the first $\ell$ columns of $A$  
and the remaining $\ell'-\ell$ columns of $A$.  
Similarly, we write ${\mathbf{u}}=({\bf m} \, |\, {\bf r})$. Then, as in \eqref{eq:orbitofL}, 
${\mathcal L} = \det(XM+{\bf m})$. Hence by Lemma \ref{lem:stab}, 
${\bf m}={\bf 0}$ and moreover, there exists a unique 
$E \in SL_\ell(\F)$ such that 
\begin{equation}\label{eq:stab}
A\left(\begin{array}{c|c} E & {\mathbf{0}} \\ \hline {\mathbf{0}} & I_{\ell'-\ell}\\  \end{array}\right)= A\left(M \, | \, R \right) = 
\left(\begin{array}{c|} {\mathbf{I}_{\ell}} \\ \hline {\mathbf{0}}\\  \end{array}\begin{array}{c}R\end{array}\right),
\end{equation}
where $ {\mathbf{0}}$ denotes the zero matrix of an appropriate size and, as before, ${\mathbf{I}_{\ell}}$ denotes the $\ell\times \ell$ 
identity matrix. The matrices $R$ and ${\bf r}$ do not have any effect on $\phi_{{\mathbf{u}},A}({\mathcal L})$ and can therefore be chosen freely. However, $R$ has to be chosen in such a way that the matrix on the right hand side of \eqref{eq:stab} has full rank. This means that the last 
$\ell'-\ell$ rows of $R$ must be linearly independent. It follows that $\#\stab({\mathcal L})$ is the product of $\#\SL_\ell(\F)$ 
and the following terms:
$$
\begin{tabular}{ll}
$q^{\ell(\ell'-\ell)},$   & for the choice of $\bf r$,\\
$q^{\ell(\ell'-\ell)},$   & for the choice of the first $\ell$ rows of $R$, and \\
${\prod_{i=0}^{\ell'-\ell-1}(q^{\ell'-\ell}-q^i)},$ & for the choice of the last $\ell'-\ell$ rows of $R$.
\end{tabular}
$$
Since $\#\SL_\ell(\F)=(q-1)^{-1}\prod_{j=0}^{\ell-1}(q^{\ell}-q^j)$, the lemma is proved.
\end{proof}

We now obtain the main result of this section concerning the number of codewords of $\C$ of weight $d(\ell,m)$. 
The result is best formulated using the Gaussian binomial coefficient defined, for any integers $k$ and $n$ with $1\le k \le n$, as follows.  
\begin{equation}\label{eq:GB}
{{n}\brack{k}}_q := \frac{[n]_q!}{[k]_q![n-k]_q!} = \frac{(q^n-1)(q^n-q)\dots (q^n-q^{k-1})}{(q^k-1)(q^k-q)\dots (q^k-q^{k-1})}.
\end{equation}
It is well-known that \eqref{eq:GB} is a monic polynomial in $q$ of degree $k(n-k)$ with nonnegative integral coefficients. 
In particular, ${{n}\brack{n}}_q = 1$.  
\begin{theorem}
\label{thm:enum}
The number $A_d$ of codewords of weight $d(\ell,m)$ of the affine Grassmann code $\C$  is given by  
$$
A_d=(q-1)q^{\ell^2}{{\ell'}\brack{\ell}}_q . 
$$
\end{theorem}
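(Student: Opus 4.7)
The plan is straightforward: combine Lemma \ref{lem:groupaction} with the explicit count in Lemma \ref{prop:stabcount} and the order of the group ${\mathfrak G}(\ell,m)$, then recognize the resulting ratio as the Gaussian binomial coefficient.

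First I would compute $\#{\mathfrak G}(\ell,m)$. By Proposition \ref{prop:groupstructure}, ${\mathfrak G}(\ell,m)$ is isomorphic to the semidirect product $M_{\ell\times\ell'}(\F)\rtimes_{\theta}\GL_{\ell'}(\F)$, so its cardinality is the product of the two factor cardinalities. Since $\#M_{\ell\times\ell'}(\F)=q^{\ell\ell'}$ and $\#\GL_{\ell'}(\F)=\prod_{i=0}^{\ell'-1}(q^{\ell'}-q^i)$, we obtain
$$
\#{\mathfrak G}(\ell,m)=q^{\ell\ell'}\prod_{i=0}^{\ell'-1}(q^{\ell'}-q^i).
$$

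Next, applying Lemma \ref{lem:groupaction} together with the formula in Lemma \ref{prop:stabcount}, I would compute the ratio
$$
A_d=\frac{\#{\mathfrak G}(\ell,m)}{\#\stab(\mathcal{L})}
=(q-1)\cdot\frac{q^{\ell\ell'}}{q^{\ell(\ell'-\ell)}}\cdot\frac{\prod_{i=0}^{\ell'-1}(q^{\ell'}-q^i)}{\prod_{i=\ell}^{\ell'-1}(q^{\ell'}-q^i)\prod_{j=0}^{\ell-1}(q^\ell-q^j)}.
$$
The first factor simplifies to $q^{\ell^2}$, and in the remaining fraction, the terms with $i\ge\ell$ cancel from the numerator, leaving only
$$
\frac{\prod_{i=0}^{\ell-1}(q^{\ell'}-q^i)}{\prod_{j=0}^{\ell-1}(q^\ell-q^j)}.
$$

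Finally, I would recognize this last expression as precisely the Gaussian binomial coefficient ${{\ell'}\brack{\ell}}_q$, in view of the defining identity \eqref{eq:GB}. Putting everything together yields $A_d=(q-1)q^{\ell^2}{{\ell'}\brack{\ell}}_q$, as claimed. There is no real obstacle here: all the substantive work has been done in the preparatory lemmas (especially the somewhat delicate Lemmas \ref{lem:stab} and \ref{prop:stabcount}), and what remains is pure algebraic simplification together with the recognition of the Gaussian binomial coefficient.
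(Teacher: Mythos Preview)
Your proposal is correct and follows exactly the same approach as the paper: compute $\#{\mathfrak G}(\ell,m)$ via Proposition \ref{prop:groupstructure}, then invoke Lemmas \ref{lem:groupaction} and \ref{prop:stabcount} and simplify. The paper's own proof is terser and leaves the algebraic simplification and identification with the Gaussian binomial coefficient implicit, whereas you spell it out; but the substance is identical.
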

\begin{proof}
Using Proposition \ref{prop:groupstructure} we see that 
$$
\#{\mathfrak G}(\ell,m)=\#M_{\ell\times\ell'}(\F) \cdot \#GL_{\ell'}(\F)=q^{\ell\ell'}\prod_{i=0}^{\ell'-1}(q^{\ell'}-q^i).
$$
Hence the desired result follows from Lemmas \ref{lem:groupaction} and \ref{prop:stabcount}.
\end{proof}

For $\ell=1$, we obtain $A_d=q(q^{\ell'}-1)$, whereas for $\ell'=\ell$, we obtain $A_d=(q-1)q^{\ell^2}$. This is in agreement with 
Remarks \ref{rem:dethyper} and 
\ref{rem:mis2l}, respectively.

\section{Connection with Grassmann codes}
\label{sec:grassmann}

Grassmann codes, denoted by $C(\ell,m)$, are $[n,k]_q$-linear codes 
defined for any positive integers $\ell,m$ satisfying $1\le \ell\le m$, where 
$$
n :  = {{m}\brack{\ell}}_q  := \frac{(q^m -1) (q^m - q) \dots (q^m - q^{\ell -1})}{(q^\ell  -1) (q^\ell  - q) \dots (q^\ell  - q^{\ell -1})} 
 \quad {\rm and} \quad k : = {\binom{m}{\ell}}.
$$
The case $\ell=m$ is trivial and in general, there is a natural equivalence between $C(\ell,m)$ and $C(m-\ell,m)$. With this in view, we shall assume $1\le \ell < m$ and that $m-\ell \ge \ell$. Thus, if we set $\ell':=m-\ell$, then we have $1\le \ell \le \ell'$ and 
$\ell + \ell'=m$, exactly as in the basic set-up of Sections \ref{sec:prelim} through \ref{sec:enum}.

A quick way to define $C(\ell,m)$ is to say that these are linear codes associated to the projective system obtained from 
the Pl\"ucker embedding of the Grassmann variety $G_{\ell,m}$  in the projective space ${\mathbb P}^{k-1}$ over $\F$. 
Recall that the \emph{Grassmann variety} (also known as the \emph{Grassmannian}) $G_{\ell,m}$ over $\F$ 
is the space of all $\ell$-dimensional subspaces of the $m$-dimensional vector space $\Fqm$ over $\F$. 
The 
{\em Pl\"ucker embedding} maps $ G_{\ell,m}(\Fq)$ into $\PP^{k-1} = \PP(\wedge^{\ell}\Fqm)$ 
by sending a $\ell$-dimensional subspace $W$ spanned by $w_1, \dots , w_\ell$ to the class of $w_1\wedge \dots \wedge w_{\ell}$. 
To obtain this a little more concretely, one can proceed as follows. 
Let 
$$
I(\ell ,m)=\{ \alpha = (\alpha_1, \dots , \alpha_\ell )\in \Z^{\ell}  : 
1\le \alpha_1 < \dots < \alpha_\ell  \le m \} 
$$
be an indexing set [ordered, say, lexicographically] for the points of 
$\PP^{k-1}(\Fq)$. Given any $\alpha \in I(\ell ,m)$ and any 
 $\ell \times m$ matrix $A=(a_{ij})$, let
$$
p_{\alpha} (A) 
= \mbox{ determinant of the $\alpha$-th submatrix of } A : = \ 
\det \left(
 a_{i \alpha_j} \right)_{1\le i, j \le \ell } . 
$$
Now, for any $W \in G_{\ell, m}(\Fq)$, we can find 
a $\ell \times m$ matrix $A_W$ whose rows give a 
basis of $W$, and then 
$$
p (W) = \left( p_{\alpha} (A_W) \right)_{\alpha  \in I(\ell ,m)} \in \PP^{k-1} 
$$
is called the {\em Pl\"ucker coordinate} of $W$. It is easy to see that
this depends only on $W$ and not on the choice of $A_W$. Moreover, 
The map $W\mapsto p(W)$ of $ G_{\ell,m}(\Fq) \to \PP^{k-1}$ 
is precisely the Pl\"ucker embedding; it is well-known that this is injective and its image equals the zero locus of certain quadratic
polynomials. 
Henceforth, we shall identify $W$ with $p(W)$. The definition of $C(\ell,m)$ as the codes corresponding to the projective system 
in $\PP^{k-1}$ given by $ G_{\ell,m}(\Fq)$  
amounts to the following. 

Let ${\mathcal G}(\ell,m) = (\wedge^{\ell}\Fqm)^*$ denote the space of linear forms on $\wedge^{\ell}\Fqm$ [this can be identified with $(\wedge^{m-\ell}\Fqm)$] and let $\{Q_1, \dots , Q_n\}$ be (arbitrary, but fixed, lifts of) points in $\wedge^{\ell}\Fqm$ corresponding to the elements 
of $G_{\ell,m}(\Fq)$  in $\PP^{k-1}$. Now the evaluation map 
$$
\Ev: {\mathcal G}(\ell,m) \to \Aff^n({\F}) \quad \mbox{defined by} \quad \Ev(g):= \left(g(Q_1), \dots , g(Q_n)\right)
$$
is injective (since the Pl\"ucker embedding is nondegenerate) and its image is precisely the Grassmann code $C(\ell,m)$. 

To relate $C(\ell,m)$ to $\C$, let us first note that the projective space $\PP^{k-1}$ is covered by affine spaces $U_{\alpha} \simeq \Aff^{k-1}$, where $U_{\alpha}:= \{p\in \PP^{k-1} : p_{\alpha}=1\}$ and $\alpha$ varies over $I(\ell,m)$. It is a classical fact that the intersection $B_{\alpha} := G_{\ell,m}\cap U_{\alpha}$ is isomorphic to an affine space of dimension $\delta:= \ell \ell' = \ell(m-\ell)$. This isomorphism is described explicitly by the Basic Cell Lemma 
of \cite{GL}.  In effect, if $W\in B_{\alpha}$, then the  $\ell\times m$ matrix $A_W$ associated to $W$ can be chosen in such a way that the $\alpha$-th submatrix of $A_W$ is the identity matrix. Now if $B_W$ denotes the $\ell\times \ell'$ matrix formed by removing from $A_W$ its $\alpha$-th submatrix,  then the entries of $B_W$ can be viewed as variables. Moreover, the $k$-tuple $p(W)$ formed by the $\ell\times \ell$ minors of $A_W$ corresponds to the $k$-tuple formed by arbitrary sized minors of $B_W$. Thus, evaluating linear forms at points of the affine open cell $B_{\alpha}$ of $G_{\ell,m}$ corresponds to evaluating linear forms in arbitrary sized minors of $B_W$ at the points of the $\delta$-dimensional affine space over $\Fq$. In other words, the evaluation map $\Ev: {\mathcal G}(\ell,m) \to \Aff^n$ reduces to the evaluation map on ${\mathcal F}(\ell,m)$ 
considered in Section \ref{sec:prelim}. 

\begin{remark}
\label{rem:affineGrassmannian}
We hope that the above discussion clarifies the genesis of the terminology \emph{affine Grassmann} 
for the codes $\C$ studied in this paper. Indeed, this terminology arises from the fact that in essence, we consider an 
affine open piece of the Grassmann variety instead of the full Grassmann variety. However, 
this terminology should not be confused with the 
so called \emph{affine Grassmannian}, which is usually an infinite dimensional object obtained from the Laurent power series valued points 
of an algebraic group. Indeed, it appears unlikely that interesting and efficient codes could be built from the infinite dimensional affine Grassmannian, and hence there does not seem to be any harm in calling the codes $\C$ as affine Grassmann codes.
\end{remark}

It may be worthwhile to compare the basic parameters of $C(\ell,m)$ and $\C$. This is done in Figure \ref{table} below. 
While the results for $\C$ are proved in the previous sections, those for $C(\ell,m)$ can be found, for example, in \cite{N} and \cite{GL}. 

\bigskip
\begin{figure}[h]
\begin{tabular}{l|l|l}
 & $C(\ell,m)$ &  $\C$ \\ \hline 
 & & \\
\mbox{\rm Length} & ${{m}\brack{\ell}}_q = q^{\delta} + q^{\delta-1}  + 2q^{\delta-2} + \cdots$ & $q^{\delta}$ \\ 
& & \\
Dimension & ${\binom{m}{\ell}}$ & ${\binom{m}{\ell}}$ \\ 
& & \\
Minimum  & $q^{\delta}$ & $q^{\delta-\ell^2}\prod_{i=0}^{\ell-1}(q^\ell-q^i)$ \\
distance & & \quad $= q^{\delta} - q^{\delta-1} - q^{\delta-2} + \cdots$ \\
 & & \\
Number of & $(q-1){{m}\brack{\ell}}_q$  & $(q-1)q^{\ell^2}{{m-\ell}\brack{\ell}}_q$ \\
min. weight & $\quad = O(q^{\delta+1})$  & $\quad = O(q^{\delta+1})$  \\
codewords & & 
\end{tabular}
\caption{A comparison of Grassmann and affine Grassmann codes}
\label{table}
\end{figure}
\bigskip

It may be noted that the two classes of codes are comparable. While the affine Grassmann codes are shorter than Grassmann codes and 
have a better rate, the Grassmann codes fare better in terms of the minimum distance and also the relative distance. In spite of
the connection between the two codes indicated above, there does not seem to be a straightforward way to deduce the properties of one code
directly from that of the other. However, the growing literature on Grassmann codes can provide pointers for further research on affine Grassmann codes, whereas the analogy of affine Grassmann codes with Reed-Muller codes and results obtained in this paper concerning their automorphisms may provide further impetus for the study of Grassmann codes.

\section*{Acknowledgments}
\label{secAck}
\begin{small}
The second named author would like to thank the Department of Mathematics of the Technical University of Denmark for its warm hospitality during his visits in June 2008 and April 2009 when some of this work was carried out.  
\end{small}


\begin{thebibliography}{99}

\bibitem{BC} 
T. P. Berger and P. Charpin, ``The automorphism group of generalized Reed-Muller codes'', \emph{Discrete Math.}, vol. 117, pp. 1--17, 1993. 

\bibitem{DGM}
P. Delsarte, J. M. Goethals, and F. J. MacWilliams, ``Generalized Reed-Muller codes and their relatives'',  
\emph{Inf. Control}, vol.16, pp. 403--442, 1974.

\bibitem{DRK1974} P. Doubilet, G. C. Rota, and J. Stein, ``Foundations of Combinatorics IX: Combinatorial
Methods in Invariant Theory'', \emph{Stud. Appl. Math.}, vol. 53, pp. 185--216, 1974.

\bibitem{Gant}
F.R. Gantmacher, \emph{The Theory of Matrices}, vol. 1, New York: Chelsea, 1960.

\bibitem{ghorpade1994}
S. R. Ghorpade, ``Abhyankar's work on Young tableaux and some recent developments'', 
in \emph{Algebraic Geometry and its Applications} (West Lafayette, 1990), 
New York: Springer-Verlag, 1994, pp. 233--265.

\bibitem{GL}
S. R. Ghorpade and G. Lachaud,  ``Higher weights of Grassmann codes'', 
in \emph{Coding Theory, Cryptography and Related Areas} (Guanajuato,  1998), 
Berlin/Heidelberg: Springer-Verlag,  2000, pp. 122--131. 

\bibitem{GPP}
S. R. Ghorpade, A. R. Patil and H. K. Pillai,  
``Decomposable subspaces,  linear sections of Grassmann varieties, 
and higher weights of Grassmann codes'', 
\emph{Finite Fields Appl.}, vol. 15, pp. 54--68, 2009.  

\bibitem{GrKPAA} 
R. L. Graham, D. E. Knuth and O. Patashnik, \emph{Concrete Mathematics}, Reading: Addison-Wesley, 
1989.

\bibitem{HJR2}
J. P. Hansen, T. Johnsen, and K. Ranestad,
``{Grassmann codes and Schubert unions}'', in \emph{Arithmetic, Geometry and Coding Theory} (Luminy, 2005), 
S\'eminaires et Congr\`es, vol. 21, Paris: Soc. Math. France,  2009, pp. 103--121. 

\bibitem{Joly}
J.-R. Joly, ``\'Equations et vari\'et\'es alg\'ebraiques sur un corps fini'', \emph{Enseign. Math.}, vol. 19, pp. 1--117, 1973. 
\bibitem{KW} 
R. Kn{\"o}rr and W. Willems, ``The automorphism groups of generalized Reed-Muller codes'', 
\emph{Ast\'erisque}, vol. 181--182, pp. 195--207, 1990. 

\bibitem{MW}
F. J. MacWilliams and N. J. A. Sloane, \emph{The Theory of Error Correcting Codes}, New York: Elsevier, 1977. 

\bibitem{N}
D. Yu. Nogin, ``Codes associated to Grassmannians'', in
\emph{Arithmetic, Geometry and Coding Theory (Luminy, 1993)},
Berlin: Walter de Gruyter, 1996, pp. 145--154.

\bibitem{handbook}
V. Pless and W. C. Huffman (Eds.), \emph{Handbook of Coding Theory}, Vols. I and II, Amsterdam: Elsevier,  1998. 
\end{thebibliography}
\end{document}